\documentclass[submission,copyright,creativecommons,a4paper]{eptcs}
\usepackage{breakurl}             
\usepackage{underscore}           

\usepackage{hyperref}
\hypersetup{
    colorlinks,
    linkcolor={red!50!black},
    citecolor={blue!50!black},
    urlcolor={blue!80!black}
}

\usepackage{tikzit}
\input{preamble.tex}


\title{Graphical Fourier Theory and the Cost of Quantum Addition}

\author{
Stach Kuijpers
\email{se.kuijpers@student.ru.nl}
\and
John van de Wetering
\email{john@vdwetering.name}
\institute{Institute for Computing and Information Sciences\\ Radboud University}
\and
Aleks Kissinger
\email{aleks@cs.ru.nl}
}

\begin{document}

\maketitle

\begin{abstract}
The ZX-calculus is a convenient formalism for expressing and reasoning about quantum circuits at a low level, whereas the recently-proposed ZH-calculus yields convenient expressions of mid-level quantum gates such as Toffoli and CCZ. In this paper, we will show that the two calculi are linked by Fourier transform. In particular, we will derive new Fourier expansion rules using the ZH-calculus, and show that we can straightforwardly pass between ZH- and ZX-diagrams using them. Furthermore, we demonstrate that the graphical Fourier expansion of a ZH normal-form corresponds to the standard Fourier transform of a semi-Boolean function. As an illustration of the calculational power of this technique, we then show that several tricks for reducing the T-gate cost of Toffoli circuits, which include for instance quantum adders, can be derived using graphical Fourier theory and straightforwardly generalized to more qubits.
\end{abstract}

\section{Introduction}

The ZX-calculus~\cite{CD1} has proven to be very suitable to reason about quantum computation, when the gate-set is based around the familiar Clifford and phase gates \cite{cliff-simp,chancellor2016graphical,kissinger2019tcount,ZXSurgery}. The ZX-calculus however has no native representation of gates such as the Toffoli or CCZ, and as a result, reasoning about them is quite involved. In contrast, the \emph{ZH-calculus}~\cite{backens2018zh} has generators and rules that allow one to easily reason about these gates and derive identities that would be very non-trivial to realize in the ZX-calculus.

In this paper we develop a bridge between the ZX- and ZH-calculus. We show how a ZH-diagram can be transformed into a ZX-diagram, and back. This translation corresponds to the Fourier transform of semi-Boolean functions, and generalizes the well-known Euler decomposition of the Hadamard gate,
\ctikzfig{hadamard-decomposition}
to CCZ gates and even arbitrary ZH-diagrams. In a sense, it expresses a translation between two different types of multi-qubit phase gates, those being controlled phase gates and exponentiated Pauli gates:
\begin{equation}\label{eq:phase-gate-as-circuit}
\tikzfig{phase-gates-comparison}
\end{equation}
Quantum circuits that are diagonal in the Z-basis can be represented efficiently using \textit{phase polynomials} (see e.g.~\cite{amy2014polynomial,campbelltcount,AmyMoscaReedMuller}). The two gates in \eqref{eq:phase-gate-as-circuit} represent individual terms of this phase polynomial written in two different forms: as a sum of multi-linear or parity terms. The left gate in \eqref{eq:phase-gate-as-circuit}, expressible as a ZH-diagram, introduces a phase if and only if all the inputs are $\ket{1}$. In contrast the right gate, expressible as a ZX-diagram, contains a \emph{phase gadget}~\cite{kissinger2019tcount}. This circuit introduces a phase precisely when the parity of the input state is odd. The Fourier transform allows us to transition between these two different types of phase gates, but is more general in that it allows us to non-trivially transform \emph{any} ZX-diagram into a ZH-diagram and back.

Moreover, the coefficients appearing in our Fourier transform directly correspond to the Fourier spectrum found in the analysis of semi-Boolean functions~\cite{o2014analysis}. A semi-Boolean function is a function $f:\{0,1\}^n\rightarrow\mathbb C$. Any such function can be expanded in either the basis of $\delta$-functions, or the orthogonal basis of parity functions. That is, we can write $f$ either in terms of its $2^n$ output values $\{ \alpha_{\vec{a}} \,|\, \vec{a} \in \mathbb B^n\}$ or its $2^n$ \textit{Fourier coefficients} 
$\{ \widetilde\alpha_{\vec{c}} \,|\, \vec{c} \in \mathbb B^n\}$ as follows:
\[f(\vec{b}) \quad=\quad \sum_{\vec{a}\in\mathbb B^n}\alpha_{\vec{a}}\delta(\vec{a},\vec{b})
\quad=\quad -\frac12 \sum_{\vec{c}\in\mathbb B^n}\widetilde{\alpha}_{\vec{c}}(-1)^{\vec{b}\cdot\vec{c}}\]
where $\vec{b}\cdot\vec{c}$ is the dot-product of Boolean vectors (modulo 2) and $\delta(\vec{a},\vec{b})$ is the Kronecker delta function, which is $1$ if $\vec{a}=\vec{b}$ and $0$ otherwise. Graphically, the Fourier transform can be depicted for any $n$ using the \textit{indexed !-box notation} introduced in Ref.~\cite{backens2018zh} (and reviewed in Section~\ref{s:ZH-dfn}):
\begin{equation}\label{eq:ft-intro}
\tikzfig{fourier-transform-intro}
\qquad\quad
\scalebox{0.8}{$\begin{cases}
\,\ \tikzfig{indexing-box} \; := \; \left(\greyphase{\neg}\right)^{1 - b_1} \ldots \quad \left(\greyphase{\neg}\right)^{1 - b_n} \\[7mm]
\ \tikzfig{c-disconnect-box-grey} \; := \; \left(\tikzfig{disconnect-piece-grey}\right)^{1 - c_1} \ldots \quad \left(\tikzfig{disconnect-piece-grey}\right)^{1 - c_n}
\end{cases}$}
\end{equation}

Using this rule, we can easily switch between the two graphical calculi and exploit the strengths of both. The resulting technique, which we call \textit{graphical Fourier theory}, becomes a powerful new addition to the toolkit for quantum circuit simplification.

It is particularly useful for circuits involving `mid-level' quantum gates such as Toffoli and CCZ, where one often wants to expand to low-level (i.e. Clifford+T) gates while keeping the T count as low as possible. 
This is a topic of active research, as it has strong implications for fault-tolerant implementations of those circuits~\cite{campbell2017roads}, and many \textit{ad hoc} techniques are known.
For example, using an ancilla it is possible to implement a Toffoli gate with 4 T-gates instead of the regular 7~\cite{selinger2013quantum,jones2013low}. Furthermore, a recent paper by Gidney shows how to cut the T-gate cost of a quantum adder circuit in half by simplifying a compute-uncompute pair of Toffoli gates on an ancilla~\cite{gidney2018halving}. Na\"ively, such a configuration would require 8 T-gates, but by means of a clever trick it can actually be implemented using just 4. We show how these tricks and generalisations thereof can be implemented straightforwardly using graphical Fourier theory, suggesting the possibility of more systematic and generalisable simplification techniques.

We recall some basic notions from the ZX- and ZH-calculi in Section~\ref{s:ZH-dfn}, where we will also introduce some new notation. Then in Section~\ref{s:Fourier-transform} we will prove the Fourier transform identities in the ZH-calculus, which we will use in Section~\ref{s:ZX-ZH-transform} to pass between the ZX- and ZH-calculi. Finally, in Section~\ref{s:tcount} we will apply our results to graphically derive T-count reduction techniques for certain kinds of Toffoli circuits.

\section{The ZX- and ZH-calculi}
\label{s:ZH-dfn}

First, we will briefly introduce the ZX- and ZH-calculi. For an extensive introduction to ZX, we refer to Ref.~\cite{CKbook}, Chapter 9, and for ZH we refer to Ref.~\cite{backens2018zh}. Both are diagrammatic languages that represent linear maps between qubits as string diagrams. Their primary point of divergence is in the choice of generators. For ZX, the generators consist of Z-spiders and X-spiders, decorated by phases $\alpha \in [0,2\pi)$:
\[
\tikzfig{Z-phase-spider} := \ket{0...0}\bra{0...0} + e^{i\alpha}\ket{1...1}\bra{1...1} \qquad
\tikzfig{X-phase-spider} := \ket{+...+}\bra{+...+} + e^{i\alpha}\ket{-...-}\bra{-...-}
\]
where $\ket{\pm} := 1/\sqrt{2} \cdot (\ket{0} \pm \ket{1})$ and an omitted phase is assumed to be $0$. Note that we adopt the convention that inputs are at the bottom and outputs are at the top. For ZH, the generators consist of Z-spiders (with phase $0$) and a new ingredient, called an \textit{H-box} that can have a complex parameter $a$:
\[
\tikzfig{Z-spider} := \ket{0\ldots0}\bra{0\ldots0} + \ket{1\ldots1}\bra{1\ldots1} \qquad\qquad
\tikzfig{H-spider} := \sum a^{i_1\ldots i_m j_1\ldots j_n} \ket{j_1\ldots j_n}\bra{i_1\ldots i_m}
\]
where in the right-hand equation, the sum runs over all $i_1,\ldots, i_m, j_1,\ldots, j_n\in\{0,1\}$. An H-box represents a matrix with $a$ as its $\ket{1\ldots1}\bra{1\ldots1}$ entry, and ones elsewhere. By convention, the parameter $a$ is omitted when $a=-1$, so an un-labelled H-box with 1 input and 1 output is a Hadmard gate (up to normalisation).

A pair of diagrams can be composed either by stacking them and joining the outputs of the first with the inputs of the second, which corresponds to composition of linear maps, or placed side-by-side, which represents the tensor product.

For our purposes, we will treat the two spiders of the ZX-calculus as derived generators. To do this, we first define the X-spider and the NOT gate, following Ref.~\cite{backens2018zh}:

\noindent\begin{minipage}{.5\linewidth}
\beq\tag{XS}\label{eq:grey-spider}
 \tikzfig{X-spider-dfn}
\eeq
\end{minipage}
\noindent\begin{minipage}{.49\linewidth}
\beq\tag{N}\label{eq:X-dfn}
 \tikzfig{negate-dfn}
\eeq
\end{minipage}

Note that a grey spider with one output and $m$ inputs evaluated in the computational basis will compute the XOR over its inputs, while the NOT gate computes the negation of a computational basis state.

\begin{figure}[b!]
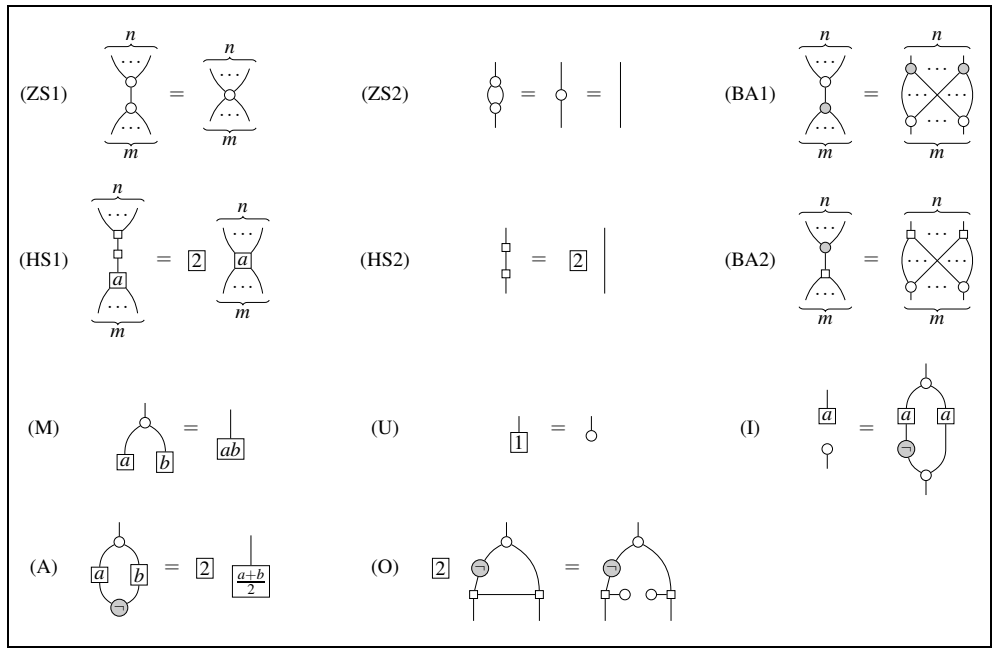

 \centering
 \scalebox{0.7}{
 \begin{tabular}{|cccccccc|}
 \hline 
 &&&&&&&\\
 (ZS1) & \tikzfig{Z-spider-rule} & \qquad \quad & (ZS2) & \tikzfig{Z-special} & \qquad & (BA1) & \tikzfig{ZX-bialgebra} \\ &&&&&&&\\
 (HS1) & \tikzfig{H-spider-rule} & & (HS2) & \tikzfig{H-identity} & & (BA2) & \tikzfig{ZH-bialgebra} \\ &&&&&&&\\
 (M) & \tikzfig{multiply-rule} & & (U) & \tikzfig{unit-rule} & & (I) & \tikzfig{intro-rule} \\ &&&&&&&\\
 (A) & \tikzfig{average-rule} & & (O) & \tikzfig{ortho-rule} & & &\\ &&&&&&&\\
 \hline
 \end{tabular}}
 \caption{The rules of the ZH-calculus.
 Throughout, $m,n$ are nonnegative integers and $a,b$ are arbitrary complex numbers.
 The right-hand sides of both \textit{bialgebra} rules (BA1) and (BA2) are complete bipartite graphs on $(m+n)$ vertices, with an additional input or output for each vertex.}
 \label{fig:ZH-rules}
\end{figure}

Going beyond the original paper we introduce some further derived generators, namely exponentiated H-boxes, which straightforwardly give us the phased spiders of ZX (up to normalisation):
\begin{equation}\label{eq:phase-spider-dfn}
    \tikzfig{eH-spider}  \qquad\qquad \tikzfig{X-spider-dfn-phase} \qquad\qquad \tikzfig{Z-spider-dfn-phase}
\end{equation}
Note that we actually allow $\alpha$ to take any complex value, so these generators are slightly more general than the `usual' ones from the ZX literature.

We consider ZX-diagrams (resp. ZH-diagrams) to be equal when they can be topologically deformed into one another, preserving the order of in- and outputs, or when they can be transformed into another another by using the rules of the ZX-calculus (resp. ZH-calculus). Both the ZX- and ZH-calculi are \textit{complete} in the sense that any equality between matrices can be proven just from diagrammatic rules~\cite{OxfordCompleteness,backens2018zh}. Hence, once we have an encoding of one into the other, the two calculi are equally powerful. Many presentations of the ZX-calculus (and fragments thereof) can be found in the literature (e.g.~\cite{CD1,CD2,backens2016simplified,LoriaCompleteness,OxfordCompleteness,CKbook}), but as we will be referring to rules of the ZH-calculus during our calculations, they are reproduced in Figure~\ref{fig:ZH-rules}. Note that in this paper we will ignore scalar factors.




The rule-set presented in Figure~\ref{fig:ZH-rules} was shown to be complete in Ref.~\cite{backens2018zh}. 
It is furthermore worth noting that the normalisation procedure described in Ref.~\cite{backens2018zh} works as well for diagrams with free parameters (which can be interpreted as matrices of polynomials) as it does for concrete diagrams. We use this fact to conclude that the following two useful equations can be proven in the calculus:

\begin{lemma}\label{lem:NOT-Hbox-negation}\label{lem:copy-AND-id}
A NOT gate can be absorbed into an H-box by negating it's label, and a 3-ary white node and H-box that are doubly connected reduces to a single 2-ary H-box:
\[\tikzfig{NOT-Hbox-negation} \qquad \qquad \tikzfig{copy-AND-id}\]
\end{lemma}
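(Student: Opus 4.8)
The plan is to reduce both equations to the verification of a matrix identity and then appeal to completeness. As the paragraph preceding the lemma already sets up, the ZH-calculus is complete and, moreover, the normalisation procedure of~\cite{backens2018zh} applies verbatim to diagrams carrying a free complex parameter (read as matrices of polynomials in that parameter). Hence it suffices to check that the two sides of each equation denote the same linear map for every value of the parameter $a$; a rewrite derivation in the calculus then exists automatically, even though we never write it out.

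For the first equation I would unfold the defining equation~\eqref{eq:X-dfn} of the NOT gate, write the H-box as the tensor $a^{i_1\ldots}$ that equals $a$ exactly on the all-ones entry and $1$ elsewhere, and compose the two. Post-composing an H-box wire with a NOT permutes the two basis values on that wire, which has the effect of moving where the parameter $a$ sits in the resulting tensor; the content of the lemma is that what one obtains is again an H-box, now carrying the negated label. The only real care needed here is bookkeeping: being explicit about which leg the NOT sits on and about what ``$\neg a$'' means for a non-Boolean parameter, so that the stated diagrammatic equation is literally the matrix identity one computes.

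For the second equation the key observation is that a white spider copies computational basis states — a consequence of (ZS1) and (ZS2) — so that feeding two legs of an H-box from the same white node forces the two corresponding input bits to coincide. Since $i^2=i$ for $i\in\{0,1\}$, the H-box tensor $a^{\,i_1 i_2\, j\ldots}$ collapses to $a^{\,i\, j\ldots}$; that is, the doubled edge together with the white node becomes a single wire into an H-box with one fewer leg, which for a $3$-ary white node and a $3$-ary H-box is exactly the $2$-ary H-box on the right. To turn this into an actual derivation I would again lean on the parametrised normal form rather than hunt for a short rewrite sequence, since the rules of Figure~\ref{fig:ZH-rules} contain no single ``copy'' rule and the shortest honest derivation through them is somewhat indirect.

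The main obstacle, such as it is, is not mathematical depth but faithfulness of the translation: one must check that the diagrammatic statement — NOT gate, H-box, doubled edges, free parameter — corresponds exactly to the matrix identity being verified, including the precise arities drawn in the figures and the scalar conventions (which this paper discards anyway). Once that correspondence is pinned down, both equations follow immediately from completeness together with the observation that normalisation is insensitive to whether the matrix entries are numbers or polynomials.
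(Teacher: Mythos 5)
Your proposal matches the paper exactly: the paper offers no explicit derivation for this lemma, justifying it solely by the preceding remark that the normalisation procedure of Backens and Kissinger works for diagrams with free parameters, so that completeness reduces both equations to precisely the matrix identities you verify. Your extra care about which leg carries the NOT and what ``negating the label'' means goes beyond what the paper records, but the strategy is the same.
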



The calculations in this paper will greatly benefit from the use of !-box ("bang box") notation \cite{kissinger2014pattern}. !-boxes represent parts of a diagram that may be replicated an arbitrary number of times, and thus allow one to express a whole family of diagrams at once.
\[
\tikzfig{bang-box-example} \quad \longleftrightarrow \quad
 \left\{
 \ \ \tikzfig{bang-box-example0}\ \ ,\quad
 \ \ \tikzfig{bang-box-example1}\ \ ,\quad
 \ \ \tikzfig{bang-box-example2}\ \ ,\quad
 \ \ \tikzfig{bang-box-example3}\ \ ,\quad
 \ \ \ldots\ \  \right\}
\]
When used in equations, corresponding !-boxes on either side of the equation should be replicated an equal number of times. For example, the grey spider \eqref{eq:grey-spider} can be redefined.
\begin{equation}\tag{\ref{eq:grey-spider}}
    \tikzfig{X-spider-dfn-bb}
\end{equation}

\noindent and one can straightforwardly prove for example that grey spiders fuse just like white spiders (XS1), and that we have a colour changing law (CC) like in the ZX-calculus:

\noindent\begin{minipage}{.5\linewidth}
\beq\tag{XS1}\label{eq:grey-spider-fusion}
 \tikzfig{grey-spider-fuse}
\eeq
\end{minipage}
\noindent\begin{minipage}{.49\linewidth}
\beq\tag{CC}\label{eq:CC}
 \tikzfig{color-change}
\eeq
\end{minipage}

Additionally, like in Ref.~\cite{backens2018zh}, we allow a !-box to be indexed by a finite set $S$. For an indexed !-box, the number of replications is fixed to the size of the set $S$, while the contents of the !-box are parametrized by the elements of the set. For the indexing set we will often use the set of all $n$-bitstrings $\mathbb B^n\equiv \{0,1\}^n$.

The completeness proof of Backens and Kissinger was established by proving that every ZH-diagram can be brought to a unique normal form. Since cups and caps allow us to to change all inputs into outputs, it suffices to consider diagrams with only outputs for this normal-form. The generic structure of the normal-form is:
\begin{equation}\tag{NF}\label{eq:NF}
\tikzfig{nf-bbox}\ \ :=\ \ 
\tikzfig{nf-picture}
\end{equation}
The complex numbers on the bottom row correspond to the components of the quantum state $\ket{\vec{a}}$ when expanded in the computational basis: $\ket{\vec{a}}=\sum_{\vec{b}\in\mathbb B^n}a_{\vec{b}} \ket{\vec{b}}$.
The grey box labelled by $\vec{b}$ is the \emph{indexing box} and it is defined as:
\begin{equation}\label{eq:indexing-box-dfn}
\tikzfig{indexing-box} \; = \; \left(\greyphase{\neg}\right)^{1 - b_1} \left(\greyphase{\neg}\right)^{1 - b_2} \ldots \quad \left(\greyphase{\neg}\right)^{1 - b_n}
\end{equation}
The expression $(\ D\ )^x$ for a diagram $D$ and Boolean expression $x$ represents the identity when $x=0$, and $D$ when $x=1$.

\section{The Fourier transform}
\label{s:Fourier-transform}
In this section we will derive our main theorems regarding the Fourier transforms of ZH-diagrams. This theorem will essentially be a broad generalization of the following lemma:
\begin{lemma}\label{lem:2qubit-AND-to-XOR}~
For any $\alpha \in \mathbb{C}$:
\ctikzfig{2qubit-AND-to-XOR-thm}
\end{lemma}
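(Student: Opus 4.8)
The plan is to verify the identity by rewriting one side into the other entirely within the ZH-calculus, keeping the $2$-qubit case as a self-contained template for the general Fourier theorem of Section~\ref{s:Fourier-transform}. I would work from the right-hand side (the ZX-style form: a $Z$-phase $\alpha/2$ on each of the two wires together with a $-\alpha/2$ phase gadget joining them) down to the left-hand side (a single $2$-ary H-box connected to both wires), collapsing the three separate phase-carrying H-boxes into one. The first move is to strip away all derived notation: using \eqref{eq:phase-spider-dfn} each $Z$-phase $\alpha/2$ becomes a white spider with a pendant $1$-ary H-box labelled $e^{i\alpha/2}$, and the phase gadget becomes a grey (XOR) spider attached to both wires carrying a pendant H-box labelled $e^{-i\alpha/2}$; the grey spider itself is then unfolded via its definition \eqref{eq:grey-spider} (equivalently via \eqref{eq:CC}) into a white spider decorated with unlabelled H-boxes.

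Next I would fuse the white nodes sitting on each wire using (ZS1), so that each of the two wires carries one white spider with a small bouquet of pendant H-boxes, and then push these together. Here the relevant tools are the H-box combining rules of Figure~\ref{fig:ZH-rules}: the multiply rule (M) to merge H-boxes sharing all their legs, the average rule (A) (and, where it appears, the unit rule (U)) to absorb the Hadamards coming from the XOR spider into the pendant labels, and Lemma~\ref{lem:NOT-Hbox-negation} both to fold any stray NOT into an H-box label and, crucially, to collapse a white node that ends up \emph{doubly} connected to an H-box into a single $2$-ary H-box. The arithmetic that has to land exactly right is the Boolean identity $b_1 + b_2 - (b_1 \oplus b_2) = 2\,b_1 b_2$: the $e^{i\alpha/2}$ factor on each wire supplies the $b_i$ contributions, and the $e^{-i\alpha/2}$ routed through the XOR spider supplies the $-(b_1 \oplus b_2)$ contribution, so that the product of the labels is $e^{i\alpha b_1 b_2}$, which is precisely the entry of the target $2$-ary H-box.

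The main obstacle is the bookkeeping of H-box labels across the grey spider: a Hadamard post-composed with a $1$-ary H-box $[c]$ is not literally another pendant $[c']$-H-box, and only becomes one after an application of (A), generating scalar prefactors along the way (which we are entitled to discard, since scalars are ignored throughout). Making the labels combine, under (M) and (A), to exactly $e^{i\alpha}$ rather than to some near-miss of it — i.e. witnessing $b_1 + b_2 - (b_1 \oplus b_2) = 2 b_1 b_2$ by rewriting rather than by evaluating matrices — is the delicate step, and it is exactly the step the $n$-wire theorem will have to reproduce. If the direct rewrite becomes unwieldy one can instead fall back on completeness of the ZH-calculus together with the observation (already invoked for Lemma~\ref{lem:NOT-Hbox-negation}) that the normalisation procedure of Ref.~\cite{backens2018zh} applies verbatim to diagrams with $\alpha$ left as a free parameter, reducing the claim to a finite check; but the constructive route is preferable precisely because it is the blueprint for the general result.
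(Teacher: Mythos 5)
Your fallback \emph{is} the paper's entire proof: the authors simply observe that both sides can be brought to normal form (relying on the remark, made just before Lemma~\ref{lem:NOT-Hbox-negation}, that the normalisation procedure of Ref.~\cite{backens2018zh} works for diagrams with free parameters) and then check that the two sides agree on the four computational basis states $\bra{00}$, $\bra{01}$, $\bra{10}$, $\bra{11}$. In that sense your proposal is correct and subsumes the published argument. Your primary, constructive route is genuinely different and more ambitious: it would witness $b_1+b_2-(b_1\oplus b_2)=2b_1b_2$ by rewriting rather than by evaluation, which is indeed the blueprint that Proposition~\ref{lem:AND-to-XOR-rule} follows at higher arity. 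But as written it has a gap exactly where you flag one: after unfolding the phase gadget via \eqref{eq:grey-spider} and fusing with (ZS1), you are left with three $1$-ary phase-carrying H-boxes attached to the white spiders through a layer of unlabelled H-boxes, and collapsing this into a single $2$-ary H-box labelled $e^{i\alpha}$ is not a routine application of (M) and (A): (M) only merges H-boxes that share all of their neighbouring white spiders, and (A) produces \emph{sums} of labels, whereas what must emerge is the product $e^{i\alpha b_1/2}e^{i\alpha b_2/2}e^{-i\alpha(b_1\oplus b_2)/2}$ correctly distributed over the four basis branches. Executing that step graphically amounts to re-deriving the generalised Euler decomposition, i.e.\ it \emph{is} the content of the lemma; asserting that the labels ``combine to exactly $e^{i\alpha}$'' states the conclusion rather than deriving it. So either complete that rewrite in full detail, or do what the paper does and what your closing sentence already licenses: invoke completeness/normalisation and reduce the claim to the finite check on the four basis states.
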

\begin{proof}
Both sides can be brought to normal-form, and easily checked to be equal, i.e.\
 both sides act the same on the four computational basis states \bra{00}, \bra{01}, \bra{10} and \bra{11}.
\end{proof}
Here, the left-hand side is an H-box, while the right-hand side consists of phase gadgets. Note that when $\alpha=\frac{\pi}{2}$ we get the familiar Euler Decomposition of the Hadamard gate. This equality essentially states the identity $\exp(i\alpha (2x_1x_2)) = \exp(i\alpha (x_1+x_2-x_1\oplus x_2))$, for $x_i \in \{0,1\}$.
By inverting this identity, $x_1\oplus x_2=x_1+x_2-2 \ x_1x_2$, we find the inverse to the previous lemma:
\begin{lemma}\label{lem:2qubit-XOR-to-AND}~For any $\alpha \in \mathbb{C}$:
\ctikzfig{2qubit-XOR-to-AND-thm}
\end{lemma}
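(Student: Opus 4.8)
The plan is to obtain Lemma~\ref{lem:2qubit-XOR-to-AND} as an immediate consequence of Lemma~\ref{lem:2qubit-AND-to-XOR}, exactly along the lines suggested by the inversion of the Boolean identity $x_1\oplus x_2 = x_1 + x_2 - 2\,x_1 x_2$. The key move is to instantiate Lemma~\ref{lem:2qubit-AND-to-XOR} not at $\alpha$ but at $-\alpha$. Since the left-hand side of Lemma~\ref{lem:2qubit-AND-to-XOR} realizes the diagonal map $\exp(2i\alpha\,x_1x_2)$, at $-\alpha$ it realizes $\exp(-2i\alpha\,x_1x_2)$, and the lemma rewrites this exponentiated H-box as a phase-$(-\alpha)$ spider on each of the two wires composed with a phase-$\alpha$ phase gadget spanning both wires.

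Next I would solve this equation for the phase gadget. Phase spiders are invertible up to a scalar — composing a phase-$\alpha$ spider with a phase-$(-\alpha)$ spider on the same wire gives the identity, by spider fusion (ZS1)/(XS1) together with the fact that a phase-$0$ spider is the identity wire — so composing both sides of the instantiated lemma along each wire with a phase-$\alpha$ spider cancels the two phase-$(-\alpha)$ spiders on the right and leaves precisely the phase-$\alpha$ phase gadget. The left-hand side then becomes the exponentiated H-box of parameter $-\alpha$ with a phase-$\alpha$ spider attached to each of its two wires, which is exactly the right-hand side of Lemma~\ref{lem:2qubit-XOR-to-AND}. Because we are ignoring scalar factors throughout, the normalisation discrepancies produced by inverting these generators do not matter.

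The only genuine care needed is matching the pictorial conventions of the displayed figure \texttt{2qubit-XOR-to-AND-thm}: one should check that the exponentiated H-box appearing there is indeed the one carrying parameter $-\alpha$ (equivalently, $e^{-2i\alpha}$ in the $\ket{11}\bra{11}$ entry), and that the cup/cap orientations used to present it as a two-input map agree on both sides. This is routine bookkeeping, and I expect it, rather than anything mathematical, to be the only obstacle. If reconciling conventions turns out to be delicate, the fallback is to mimic the proof of Lemma~\ref{lem:2qubit-AND-to-XOR} verbatim: bring both sides to the ZH normal form~\eqref{eq:NF} and check that they agree on the four computational basis effects $\bra{00}$, $\bra{01}$, $\bra{10}$, $\bra{11}$, which reduces the statement to the scalar identity $\exp(i\alpha(x_1\oplus x_2)) = \exp(i\alpha(x_1+x_2-2x_1x_2))$ for $x_1,x_2\in\{0,1\}$.
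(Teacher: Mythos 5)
Your proposal is correct and takes essentially the same route as the paper: the paper's proof likewise applies Lemma~\ref{lem:2qubit-AND-to-XOR} (at the negated parameter, to the H-box labelled $-2\alpha$ appearing on the right-hand side) and then cancels the resulting $\pm\alpha$ phases. The only cosmetic difference is that the paper attributes the phase cancellation to rule (M) --- since phased spiders are defined here via arity-one exponentiated H-boxes whose labels multiply --- rather than to spider fusion alone, but this is exactly the routine bookkeeping you anticipated.
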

\begin{proof}
This follows from applying Lemma~\ref{lem:2qubit-AND-to-XOR} to the H-box labelled $-2\alpha$ on the right-hand side and cancelling $\pm\alpha$ phases using (M).
\end{proof}

These two expressions can be generalized to an arbitrary number of qubits and H-boxes/phase gadgets. While these n-qubit forms are powerful rewrite rules on themselves, they will also prove instrumental in our proof of the Fourier transform. In order to show that these two expressions can be generalized to an arbitrary amount of qubits, we need to introduce a new shorthand in the form of \emph{disconnect boxes}. These are diagrams drawn as trapezoids which we label by $\Vec{b}=b_1\dots b_n \in \mathbb B^n$ and have exactly $n$ inputs and $n$ outputs:
\begin{equation}\label{eq:disconnect-box-dfn}
\tikzfig{disconnect-box-grey} \; = \; \left(\tikzfig{disconnect-piece-grey}\right)^{1 - b_1} \ldots \quad \left(\tikzfig{disconnect-piece-grey}\right)^{1 - b_n} \qquad \qquad \tikzfig{disconnect-box-white} \; = \; \left(\tikzfig{disconnect-piece-white}\right)^{1 - b_1} \ldots \quad \left(\tikzfig{disconnect-piece-white}\right)^{1 - b_n}
\end{equation}
Input $i$ is connected by an identity wire to output $i$ as long as $b_i=1$. For $b_i=0$, each of the disconnect boxes will erase a specific connection. The grey trapezoid (dis)connects a white node and grey node, the white trapezoid (dis)connects a white node and H-box. We can use these diagrams to represent a diagram with arbitrary connectivity, since by changing the values in $\vec{b}$, we change the connectivity in the diagram. Note that these disconnect diagrams are not symmetric under vertical reflection, which explains why we draw the diagram asymmetrically. Before we continue, we state some properties that these boxes have, the proofs of which can be found in Appendix~\ref{app:disconnect-boxes}.
\begin{lemma}\label{lem:index-and-disconnect-box-copy}
 The indexing and disconnect boxes copy through white spiders, i.e.\ for any $\vec{b}\in\mathbb B^n$:
\[ \scalebox{0.75}{\tikzfig{indexing-box-copy} \qquad
\tikzfig{disconnect-box-white-copy} \qquad
\tikzfig{disconnect-box-grey-copy}} \]
\end{lemma}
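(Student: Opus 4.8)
The plan is to peel all three identities down to a single statement about one wire and then reassemble. By the definitions \eqref{eq:indexing-box-dfn} and \eqref{eq:disconnect-box-dfn}, each of the three boxes is a tensor product over its $n$ wires of a fixed one-input, one-output \emph{piece} raised to the Boolean power $1-b_i$: the piece is the NOT gate for the indexing box, the grey disconnect piece for the grey box, and the white disconnect piece for the white box. Since the white spiders in the statement act wire by wire and a wire with $b_i=1$ carries only the identity, everything reduces to the claim that \emph{each of the three pieces copies through a white spider}. By white-spider fusion (ZS1) it suffices to establish this for the elementary white spider with one input and two outputs (together with the degenerate one-input, zero-output case), from which the general case follows by composing white spiders, and the tensor-product structure then recovers the full $n$-wire boxes.

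For the NOT gate this is precisely the $\pi$-copy rule. I would obtain it by unfolding the definition \eqref{eq:X-dfn} of NOT in terms of white spiders and H-boxes and then applying the H-box rules (HS1)--(HS2), the bialgebra rule (BA1), and Lemma~\ref{lem:NOT-Hbox-negation}; alternatively, for each fixed arity both sides are concrete diagrams, so one may normalise them to the form \eqref{eq:NF} and compare on the computational basis exactly as in the proof of Lemma~\ref{lem:2qubit-AND-to-XOR}. Either route settles the indexing box.

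For the two disconnect pieces I would exploit their structure: up to the scalar factors we systematically ignore, each piece is a composite of the white co-unit $\propto\bra{+}$ (the effect that deletes a leg of a white spider) with a state that deletes a leg of the node on the far side of the edge --- the grey unit $\propto\ket{0}$ for the grey piece, and the state $\ket 1$ for the white piece. Copying such a composite through a white spider then factors into two elementary facts. First, capping legs of a white spider with white co-units merely re-fuses it by (ZS1), so the $\bra{+}$ part passes through for free. Second, $\ket 0$ and $\ket 1$ are copied by white spiders, $\ket 0 \mapsto \ket 0 \otimes \ket 0$ and $\ket 1 \mapsto \ket 1 \otimes \ket 1$; the first is the white copy rule (derivable from (ZS1), (ZS2) and the definition \eqref{eq:grey-spider} of the grey unit, or via (U)/(I)), and the second follows by combining it with the $\pi$-copy rule just proved, since $\ket 1 = \text{NOT}\,\ket 0$. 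Putting the two facts together shows that each disconnect piece copies through the binary white comultiplication; (ZS1) lifts this to arbitrary white spiders, and the tensor-product structure lifts it to the full disconnect boxes, giving all three equations.

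\textbf{Main obstacle.} The crux is the $\pi$-copy step carried out \emph{inside} the ZH-calculus: unlike in the ZX-calculus there is no primitive phased $Z$-spider, so NOT must be genuinely expanded into H-boxes and white nodes and the copy law re-derived, all the while tracking the stray H-box scalars that the paper discards. Once that is in hand the disconnect-piece arguments are routine bookkeeping, which is why the detailed proofs are deferred to Appendix~\ref{app:disconnect-boxes}.
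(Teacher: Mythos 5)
Your proposal is correct and follows the same skeleton as the paper's proof: reduce to a single wire using the tensor-product structure of the boxes, dispose of the $b_i=1$ wires as identities, and verify the $b_i=0$ piece explicitly; for the indexing box the paper simply cites the $\pi$-copy law from Backens--Kissinger rather than re-deriving it inside ZH as you propose. The only substantive difference is in the $b_i=0$ case for the disconnect pieces: the paper gives a short direct diagrammatic rewrite, whereas you factor each piece as a copyable state ($\ket{0}$ for the grey piece, $\ket{1}$ for the white piece) paired with the white counit $\propto\bra{+}$, and then combine the copy law for $\ket{0},\ket{1}$ with the fusion/counit law for $\bra{+}$. That decomposition is sound, and it has the added virtue of explaining the paper's caveat that the lemma only holds when the shorter end of the trapezoid points towards the white spiders: the copy succeeds precisely because the end presented to the spider being copied is the copyable $\ket{0}$/$\ket{1}$ end, while the $\bra{+}$ end merely fuses away. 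The one thing to tighten is exactly this orientation --- your prose reads as if the $\bra{+}$ end is the one attached to the spider being copied, which would make the piece fail to copy (it would produce $\ket{+}\otimes\ket{+}$ rather than the copied pair), so make sure the roles of the two ends match the figures before assembling the $n$-wire statement.
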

\noindent Note that this only holds when the 'shorter' end of the trapezoid box is pointing towards the white spiders.

We introduce some special notation for reasoning with bit-strings. When $\vec{c},\vec{d} \in \mathbb{B}^n$ we write $\vec{d}\subseteq \vec{c}$ when $\vec{d}$ only has ones where $\vec{c}$ has ones. I.e.\ if we let $c=101$, then when $d=100$ we have $\vec{d}\subseteq \vec{c}$, but not when $d=010$. We write $\mid \vec{c}\mid$ for the \emph{weight} of the vector, i.e.\ the amount of ones it contains. Finally, we write $\vec{c}\cdot\vec{d}$ to represent their inner product $\sum_i c_id_i$.


\begin{lemma}\label{lem:index-box-eqs}
Disconnect and indexing boxes satisfy the following equations:
\[
\tikzfig{gray-white-disconnect-hbox-erase} \qquad\qquad
\tikzfig{indexing-disconnect-commute}
\]
\end{lemma}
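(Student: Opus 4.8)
The plan is to reduce each of the two claimed equalities to a bounded-size identity that can be checked directly with the rules of Figure~\ref{fig:ZH-rules}. The crucial observation is that, by their very definitions \eqref{eq:indexing-box-dfn} and \eqref{eq:disconnect-box-dfn}, both the indexing box and the two flavours of disconnect box are sequential composites of $n$ independent single-wire pieces, one per index $i$, and the piece on wire $i$ is an identity whenever the corresponding bit equals $1$. Both sides of each equation inherit this ``tensor over $i$'' structure with matching pieces on corresponding wires, so it suffices to verify the equations one wire at a time. That leaves, for each wire, only the two cases bit $=1$ (where everything in sight collapses to a bare wire and the equation is trivial) and bit $=0$ (a single fixed diagram identity).

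For the first equation I would, in the bit $=0$ case, unfold the grey disconnect piece into its defining white-node/grey-node fragment, push the white node into the H-box using the bialgebra rule (BA2) --- or, more directly, Lemma~\ref{lem:copy-AND-id} --- and then collapse what remains using (HS1) together with white spider fusion (ZS1), which makes the relevant H-box leg genuinely disconnected, as the name of the equation suggests. For the second equation, the bit $=0$ case amounts to sliding a NOT gate past the erasing fragment of a disconnect box; since that fragment caps the wire with a state that the NOT fixes --- here one uses the definition \eqref{eq:X-dfn} of NOT, the rule (M), and Lemma~\ref{lem:NOT-Hbox-negation} to absorb or commute it past the fragment --- the indexing box and the disconnect box simply change places, the label on the erased wire being immaterial. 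As an alternative to the hands-on rewriting, one can observe that each single-wire identity lives entirely within the fragment handled by the normalisation algorithm of Ref.~\cite{backens2018zh}, so both sides may instead be brought to normal form and compared on the two computational basis states, exactly as in the proofs of Lemmas~\ref{lem:2qubit-AND-to-XOR} and~\ref{lem:copy-AND-id}.

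The only genuinely delicate point is bookkeeping around the asymmetry of the disconnect boxes: since they are not invariant under vertical reflection, one must be careful about the orientation in which each single-wire piece appears --- which end of the trapezoid faces the white spider, the grey spider, or the H-box --- because this dictates which half of the bialgebra and which unit/counit appear when a wire is erased, and hence whether the copy/erase moves of Lemma~\ref{lem:index-and-disconnect-box-copy} apply as stated. For the commuting equation one must additionally track precisely what becomes of the $i$-th component of the indexing label when $c_i = 0$. Once the orientations are pinned down, the remaining diagrammatic steps are all routine applications of the ZH rules.
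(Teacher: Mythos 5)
Your overall strategy --- a connection-by-connection case analysis settled by the ZH rules --- is the same one the paper uses, but the way you set it up contains a genuine gap. You reduce each equation to ``two cases per wire, bit $=1$ or bit $=0$'', yet both equations involve \emph{two} independent bit-labels. In the first equation the two disconnect boxes carry labels $\vec{c}$ and $\vec{d}$, so wire $i$ requires the four cases $(c_i,d_i)\in\{00,01,10,11\}$, and the interesting content lives precisely in the mixed case $c_i=0$, $d_i=1$: there the composite does not reduce to a per-wire fragment of the right-hand side but deletes the \emph{entire} H-box, a single object shared by all $n$ wires. The correct conclusion --- the H-box survives iff $\vec{d}\subseteq\vec{c}$ --- is a global condition that cannot be obtained by tensoring independent single-wire identities, and your proposal never produces it.

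The second equation has the same problem in a different guise. Per wire the cases are governed by the pair $(b_i,c_i)$: when $b_i=0$ and $c_i=0$ the NOT gate is absorbed into the erasing fragment, but when $b_i=0$ and $c_i=1$ it passes through and survives. The right-hand side of the lemma then requires collecting the $\sum_i(1-b_i)c_i=|\vec{c}|-\vec{b}\cdot\vec{c}$ surviving NOT gates, cancelling them in pairs, and recording only the parity $(|\vec{c}|-\vec{b}\cdot\vec{c})\bmod 2$. Your description --- ``the indexing box and the disconnect box simply change places, the label on the erased wire being immaterial'' --- would yield a wrong right-hand side: the boxes do not merely swap, and the residual parity term is exactly what the proof of the Fourier transform later consumes. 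You flag ``tracking what becomes of the $i$-th component of the indexing label'' as a delicate point but never carry it out, and that global bookkeeping step is the actual substance of the lemma.
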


\begin{lemma}\label{lem:completing-Hbox-connections}
We can substitute an H-box connected to a subset of white spiders by a collection of H-boxes connected to the full set, in the following manner:
\[\tikzfig{completing-Hbox-connections}\]
\end{lemma}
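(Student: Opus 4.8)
The statement is the diagrammatic form of a Möbius/inclusion–exclusion identity on the Boolean lattice of subsets of $\{1,\dots,n\}$. Indeed, an H-box with label $a$ attached to the subset $\vec c$ of the $n$ white spiders evaluates, on a basis effect $\bra{\vec x}$, to $a^{\,[\vec c\subseteq\vec x]}$, and since $[\vec c\subseteq\vec x]=\sum_{\vec c\subseteq\vec d\subseteq\vec 1}[\vec x=\vec d]$ we get $a^{\,[\vec c\subseteq\vec x]}=\prod_{\vec c\subseteq\vec d\subseteq\vec 1}a^{\,[\vec x=\vec d]}$, where each factor $a^{\,[\vec x=\vec d]}$ is realised by an H-box attached to \emph{all} $n$ white spiders with the wires outside $\vec d$ routed through NOT gates (equivalently, through white disconnect boxes indexed by $\vec d$). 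The right-hand side of the lemma is this product once those NOT gates have been rewritten using Lemma~\ref{lem:NOT-Hbox-negation} and the factors re-expressed with the disconnect-box notation of \eqref{eq:disconnect-box-dfn}. So the whole content is to lift this scalar identity to the diagrams.

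\textbf{Route A (normal form).} The most economical route is the one already used for Lemma~\ref{lem:2qubit-AND-to-XOR}: both sides are built solely from white spiders and (exponentiated) H-boxes with free complex parameters, and — as noted after Figure~\ref{fig:ZH-rules} — the normalisation procedure of Ref.~\cite{backens2018zh} applies verbatim to diagrams with free parameters. Hence it suffices to bring both sides to the normal form \eqref{eq:NF} and check that the resulting matrices of polynomials coincide; concretely, compose each side with every $\bra{\vec x}$, $\vec x\in\mathbb B^n$, and verify the scalar identity $a^{\,[\vec c\subseteq\vec x]}=\prod_{\vec c\subseteq\vec d\subseteq\vec 1}a^{\,[\vec x=\vec d]}$ above, which in turn reduces to the binomial fact $\sum_{\vec c\subseteq\vec d\subseteq\vec x}(-1)^{|\vec d|-|\vec c|}=[\vec c=\vec x]$.

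\textbf{Route B (diagrammatic induction).} For a proof that stays inside the calculus I would induct on the number $n-|\vec c|$ of white spiders the H-box is \emph{not} attached to, adding one wire at a time. Given a wire $j$ with $c_j=0$, use the Boolean identity $x=x\,x_j+x\,(1-x_j)$ (with $x=\prod_{i:c_i=1}x_i$) to replace the H-box by two fresh copies with label $a$: one attached to $\vec c$ together with wire $j$, the other attached to $\vec c$ together with wire $j$ routed through a NOT; copying of white spiders via (ZS1)/(XS1) supplies the extra legs, and the multiply rule (M) justifies that the two copies recombine to the original. The NOT-decorated copy is then brought into the disconnect-box form appearing on the right-hand side of the lemma using Lemma~\ref{lem:NOT-Hbox-negation}, collapsing any doubled white-spider–H-box connection with Lemma~\ref{lem:copy-AND-id}, and the connectivity is tidied with the copy laws of Lemma~\ref{lem:index-and-disconnect-box-copy} and the commutation equations of Lemma~\ref{lem:index-box-eqs}. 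Iterating over all missing wires and regrouping the $2^{\,n-|\vec c|}$ resulting H-boxes by the subset to which each has become attached yields the stated product.

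\textbf{Main obstacle.} The bookkeeping is the hard part: one must show that the inductive step in Route B \emph{preserves} the product-over-supersets structure rather than producing an unstructured tangle, i.e. track precisely how the labels (and any sign contributed by Lemma~\ref{lem:NOT-Hbox-negation}) distribute over the exponentially many H-boxes, and check that the same subset is never generated twice with conflicting labels. A secondary nuisance is the asymmetry of the disconnect boxes — all rewrites must keep their ``short'' end pointing towards the white spiders so that Lemmas~\ref{lem:index-and-disconnect-box-copy} and~\ref{lem:index-box-eqs} stay applicable — and, as throughout the paper, scalar factors are dropped. Given these pitfalls, I would write up Route A as the official proof and merely remark that Route B is available.
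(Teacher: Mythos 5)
Your Route~A is essentially the paper's own proof: the lemma is established by writing the left-hand side in normal form (relying, as you note, on the fact that the normalisation procedure of Ref.~\cite{backens2018zh} works for diagrams with free parameters) and expanding the indexed !-box, which amounts to verifying on each basis effect $\bra{\vec{x}}$ the scalar identity $a^{[\vec{c}\subseteq\vec{x}]}=\prod_{\vec{d}\supseteq\vec{c}}a^{[\vec{x}=\vec{d}]}$. Two small corrections: the factors $a^{[\vec{x}=\vec{d}]}$ on the right-hand side are realised by \emph{indexing} boxes (NOT gates as in \eqref{eq:indexing-box-dfn}), not by white disconnect boxes as your parenthetical suggests (a disconnect box labelled $\vec{d}$ would give $a^{[\vec{d}\subseteq\vec{x}]}$ instead), and the identity needs no M\"obius/binomial input, since $[\vec{c}\subseteq\vec{x}]=\sum_{\vec{d}\supseteq\vec{c}}[\vec{x}=\vec{d}]$ holds because at most one term of the sum is nonzero.
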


Using these rules for disconnect boxes we can prove the generalizations of Lemmas~\ref{lem:2qubit-AND-to-XOR} and \ref{lem:2qubit-XOR-to-AND}:

\begin{proposition}\label{lem:AND-to-XOR-rule}
We can expand any exponentiated H-box as a combination of phase-gadgets:
\ctikzfig{AND-to-XOR-thm}
\end{proposition}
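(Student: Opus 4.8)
The plan is to prove the identity by induction on the arity $n$ of the exponentiated H-box, with Lemma~\ref{lem:2qubit-AND-to-XOR} (``$\exp(i\alpha\cdot 2x_1x_2)=\exp(i\alpha(x_1+x_2-x_1\oplus x_2))$'' diagrammatically) serving as both the base case and the engine of the inductive step; the cases $n\le 2$ are immediate. For the step, I would single out one leg of an $n$-ary exponentiated H-box and, using Lemma~\ref{lem:copy-AND-id}, rewrite the diagram so that the remaining $n-1$ legs appear as an $(n-1)$-ary exponentiated H-box and the singled-out leg together with it forms a configuration to which Lemma~\ref{lem:2qubit-AND-to-XOR} applies. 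This breaks the H-box into three parts: a phase gadget on the singled-out wire; the $(n-1)$-ary exponentiated H-box on the other wires; and a phase gadget on the parity of the singled-out wire with the output of that $(n-1)$-ary H-box.

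The first part is already one of the target gadgets (for the singleton subset). The second is handled directly by the induction hypothesis, producing the target gadgets indexed by the nonempty $\vec{c}$ not meeting the singled-out wire. The third part carries the real content: I would expand its internal $(n-1)$-ary H-box by the induction hypothesis (and Lemma~\ref{lem:2qubit-XOR-to-AND}), taking care that no $n$-fold product is re-created, and then for each resulting parity $\vec{c}$ use Lemma~\ref{lem:index-and-disconnect-box-copy} and Lemma~\ref{lem:index-box-eqs} to rewrite the attached disconnect boxes so that the singled-out wire is adjoined to $\vec{c}$; Lemma~\ref{lem:completing-Hbox-connections} together with the multiply rule (M) then merges, for each parity, the phases of all gadgets that land on it. The claim is that after this merging the accumulated phase on the gadget for each nonempty $\vec{c}$ is the prescribed value.

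The hard part is exactly this last accounting: the diagram manipulation has to reproduce, with correct signs, the elementary identity $x_1\cdots x_n=\tfrac{1}{2^{n-1}}\sum_{\emptyset\neq\vec{c}\subseteq 1^n}(-1)^{|\vec{c}|+1}\bigl(\bigoplus_{i:\,c_i=1}x_i\bigr)$ for $x_i\in\{0,1\}$, and keeping the factors $(-1)^{|\vec{c}|+1}$ consistent while disconnect-box labels are pushed around is where care is needed. As a cross-check — and a self-contained alternative — one may skip the induction: both sides are diagonal in the computational basis, so by the normalisation procedure (which, as noted in the text, applies to diagrams with free parameters) it is enough to evaluate them on each $\bra{\vec{b}}$, $\vec{b}\in\mathbb{B}^n$; unfolding the disconnect-box definition~\eqref{eq:disconnect-box-dfn} shows the gadget for $\vec{c}$ contributes $\exp\!\bigl(i(-1)^{|\vec{c}|+1}\alpha\,(\vec{b}\cdot\vec{c}\bmod 2)\bigr)$, and the same combinatorial identity (a one-line binomial computation) concludes.
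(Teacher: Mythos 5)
Your overall strategy --- induction on the arity with Lemma~\ref{lem:2qubit-AND-to-XOR} as base case, followed by merging gadgets of equal support via Lemma~\ref{lem:Pgadget-same-control-combination} --- is the paper's strategy, but the way you peel off a single wire makes the inductive step circular. After applying Lemma~\ref{lem:2qubit-AND-to-XOR} to the singled-out wire $x_n$ against the conjunction $y=x_1\cdots x_{n-1}$, your third piece is $e^{-i\alpha(x_n\oplus y)}$: a parity gadget one of whose legs carries the \emph{AND} of the remaining wires. This is not an exponentiated H-box, so the induction hypothesis does not apply to it (``expand its internal $(n-1)$-ary H-box'' does not typecheck: the hypothesis expands the diagonal operator $e^{i\gamma x_1\cdots x_{n-1}}$, not an AND-valued function feeding into a grey spider), and the only way to un-nest it algebraically, $x_n\oplus y = x_n+y-2x_ny$, reintroduces the full product $x_1\cdots x_n$ you are trying to expand. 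Your caveat ``taking care that no $n$-fold product is re-created'' is exactly the unresolved point. The paper sidesteps this by pairing off \emph{two} wires rather than one: writing $x_1\cdots x_{n+1}=(x_1\cdots x_{n-1})(x_nx_{n+1})$ and applying the two-qubit lemma to $x_n,x_{n+1}$ under the control of the first $n-1$ wires produces \emph{three} $n$-ary exponentiated H-boxes --- on $\{x_1,\ldots,x_n\}$, on $\{x_1,\ldots,x_{n-1},x_{n+1}\}$, and on $\{x_1,\ldots,x_{n-1},\,x_n\oplus x_{n+1}\}$. The last leg of the third box is an XOR of two physical wires, i.e.\ a grey spider, so after expanding all three by the induction hypothesis every resulting gadget is still a genuine phase gadget (grey spiders fuse, (XS1)); an AND-inside-a-parity never appears. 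Lemma~\ref{lem:H-box-multiple-legs} and Lemma~\ref{lem:Pgadget-same-control-combination} then clean up duplicated legs and coincident gadgets, with the case distinction of Appendix~\ref{app:case-distinction} recollapsing everything into the indexed !-box.

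Your fallback argument, on the other hand, is sound and is precisely how the paper justifies its own base case and Lemma~\ref{lem:NOT-Hbox-negation}: both sides are diagonal with entries polynomial (after exponentiation) in the free parameter $\alpha$, the normalisation procedure applies to parametrised diagrams, and the entrywise check reduces to $x_1\cdots x_n=\tfrac{1}{2^{n-1}}\sum_{\vec{0}\neq\vec{c}}(-1)^{|\vec{c}|+1}\bigl(\bigoplus_{i:\,c_i=1}x_i\bigr)$, which your binomial computation verifies. So the proposition is indeed provable along your lines, but only via the fallback; the inductive route as you describe it would stall at the third piece unless you switch to the XOR-pairing decomposition.
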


\begin{proposition}\label{lem:XOR-to-AND-rule}
We can expand any phase-gadget as a combination of exponentiated H-boxes:
\ctikzfig{XOR-to-AND-thm}
\end{proposition}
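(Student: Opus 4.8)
The plan is to mirror the way Lemma~\ref{lem:2qubit-XOR-to-AND} was obtained from Lemma~\ref{lem:2qubit-AND-to-XOR}, now using the $n$-qubit rule of Proposition~\ref{lem:AND-to-XOR-rule} in place of the $2$-qubit one. I would start from the right-hand side of the claimed identity: a product of exponentiated H-boxes indexed by $\vec c\in\mathbb B^n$, where the $\vec c$-th box is attached (via white disconnect boxes, exactly as in Proposition~\ref{lem:AND-to-XOR-rule}) to the subset of white spiders picked out by $\vec c$ and carries a phase that is a fixed multiple of $\alpha$ depending only on $|\vec c|$. Then I would apply Proposition~\ref{lem:AND-to-XOR-rule} to each of these exponentiated H-boxes in turn. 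This replaces every box by its own family of phase gadgets, each one sitting on some sub-subset $\vec d\subseteq\vec c$ of the wires, so that altogether one obtains a single large collection of phase gadgets indexed by pairs $\vec d\subseteq\vec c$.

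The next step is to collect like terms. Phase gadgets acting on the same set of wires fuse by adding their phases; this is provable in the calculus using grey-spider fusion (XS1) together with (M) and the colour-change law, just as ordinary white spiders fuse. After fusing, the total phase landing on a given subset $\vec d$ is $\alpha$ times a sum, over all $\vec c\supseteq\vec d$, of products of the two families of coefficients. The proposition then reduces to the purely combinatorial claim that this sum vanishes for every $\vec d\neq\vec 1$ and equals $\alpha$ on $\vec d=\vec 1$ --- i.e.\ that the AND-to-XOR and XOR-to-AND coefficient matrices are mutually inverse. Concretely this is M\"obius inversion on the Boolean lattice, equivalently the pair of integer identities $\bigoplus_i x_i=\sum_{\emptyset\neq S}(-2)^{|S|-1}\prod_{i\in S}x_i$ and $\prod_i x_i=-2^{\,1-n}\sum_{\emptyset\neq S}(-1)^{|S|}\bigoplus_{i\in S}x_i$, each of which follows by a short induction on $n$ (or from $(-1)^{\bigoplus_i x_i}=\prod_i(1-2x_i)$), and which generalize the inversion $x_1\oplus x_2=x_1+x_2-2x_1x_2$ used in Lemma~\ref{lem:2qubit-XOR-to-AND}.

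The disconnect-box bookkeeping --- tracking precisely which wires each intermediate H-box and each phase gadget touches, and commuting disconnect boxes past white spiders when fusing --- I would handle exactly as in the proof of Proposition~\ref{lem:AND-to-XOR-rule}, using Lemmas~\ref{lem:index-and-disconnect-box-copy}, \ref{lem:index-box-eqs} and \ref{lem:completing-Hbox-connections} (and Lemma~\ref{lem:NOT-Hbox-negation} to absorb NOT gates into H-box labels). I expect the main obstacle to be precisely this bookkeeping together with the cancellation step: keeping the double index $\vec d\subseteq\vec c$ straight through the double expansion and the fusions, and verifying that everything telescopes to the single target gadget on $\vec 1$ (scalar factors being ignored throughout, as elsewhere in the paper). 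If that route becomes unwieldy, there is a clean fallback: both sides are diagonal in the computational basis, so one can instead bring both to normal form and compare the $\bra{\vec b}(-)\ket{\vec b}$ entries directly, which collapses the whole argument to the integer identities above --- this is the strategy used for Lemma~\ref{lem:2qubit-AND-to-XOR} and requires only a routine generating-function computation.
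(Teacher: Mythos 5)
Your proposal is correct, but it takes a genuinely different route from the paper. The paper proves Proposition~\ref{lem:XOR-to-AND-rule} in Appendix~\ref{app:XOR-to-AND} by a direct induction on the number of wires: the base case is Lemma~\ref{lem:2qubit-XOR-to-AND}, and the inductive step splits the indexing bit-string on its last bit (the case distinction of Appendix~\ref{app:case-distinction}), applies the $n$-wire hypothesis to the resulting sub-diagrams, and recombines --- structurally mirroring the inductive proof of Proposition~\ref{lem:AND-to-XOR-rule}. You instead treat Proposition~\ref{lem:AND-to-XOR-rule} as an input and derive its inverse by double expansion: expand every exponentiated H-box on the right-hand side into phase gadgets, fuse gadgets on identical wire subsets via Lemma~\ref{lem:Pgadget-same-control-combination}, and reduce to the fact that the coefficient on subset $\vec d$ is $\alpha(-1)^{|\vec d|}\sum_{\vec c\supseteq\vec d}(-1)^{|\vec c|}$, which vanishes unless $\vec d=\vec 1$ --- i.e.\ M\"obius inversion on the Boolean lattice. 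This is exactly how the paper obtains the two-qubit inverse (Lemma~\ref{lem:2qubit-XOR-to-AND} is proved by applying Lemma~\ref{lem:2qubit-AND-to-XOR} and cancelling with (M)), so your argument is the honest $n$-ary generalisation of that step; your coefficient identities check out, and your normal-form fallback is sanctioned by the paper's own remark that normalisation works for diagrams with free parameters. What your route buys is a conceptual explanation of \emph{why} the two propositions are mutually inverse, plus the result for all $n$ at once given the forward rule; what it costs is heavier disconnect-box bookkeeping (a double index $\vec d\subseteq\vec c$, and a composition lemma for nested white disconnect boxes analogous to Lemma~\ref{lem:index-box-eqs}(i), which you would need to state and prove, along with the trivial $|\vec c|=1$ case of the forward rule, which the paper only states from $n=2$). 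The paper's induction keeps each step local and symmetric with the proof of Proposition~\ref{lem:AND-to-XOR-rule}, at the cost of not making the inverse relationship explicit.
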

These propositions are proven by induction, where the base cases are given by Lemmas~\ref{lem:2qubit-AND-to-XOR} and \ref{lem:2qubit-XOR-to-AND}. The proofs can be found in Appendices \ref{app:AND-to-XOR} and \ref{app:XOR-to-AND}. Our goal now is to extend these propositions to hold for a ZH-diagram normal-form. To do so we will need a few more lemmas.
\begin{lemma}\label{lem:Hbox-same-control-combination}
H-boxes connected to an identical set of white spiders can be combined. In particular, H-boxes connected through an identically labeled disconnect box can be combined.
\[\tikzfig{Hbox-same-control-combination}\]
\end{lemma}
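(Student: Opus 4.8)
The plan is to reduce the claim to the multiply rule (M) of Figure~\ref{fig:ZH-rules}, which is exactly the statement that two H-boxes sharing each of their legs with a common white spider combine into a single H-box whose label is the product of the two original labels. Everything else is bookkeeping to get the diagrams into the precise shape that (M) (or its iterated form) expects.

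First I would dispatch the bare case: H-boxes $A$ and $B$, with labels $a$ and $b$, each connected by a single wire to every white spider in a common set $W$, and possibly carrying additional legs of their own. The work here is purely cosmetic. Using spider fusion (ZS1) I split each $w\in W$ into a node carrying the external wires together with a fresh $3$-ary white node carrying one wire to $A$ and one to $B$; if any connection happens to be a multi-edge I first collapse it using Lemma~\ref{lem:copy-AND-id}. The picture now matches the left-hand side of (M) (applied simultaneously to all shared legs, i.e.\ in the derived multi-leg form obtained from (M) by iteration and (ZS1)), so (M) replaces $A$ and $B$ by a single H-box labelled $ab$, and re-fusing the auxiliary spiders yields the asserted diagram. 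The !-box discipline makes this argument uniform in the number of shared spiders.

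For the \emph{in particular} statement I would unfold the disconnect box via its definition \eqref{eq:disconnect-box-dfn} as a tensor product of single-leg pieces. On the legs $i$ with $b_i=1$ the piece is just an identity wire, so $A$ and $B$ are attached to $w_i$ in exactly the same way and the previous paragraph applies verbatim. On the legs $i$ with $b_i=0$ the disconnecting piece already contains a white node sitting between the spider and the H-box, and the erasing equation of Lemma~\ref{lem:index-box-eqs} lets me pull that piece out in front of \emph{both} H-boxes identically (equivalently, (M) applies straight through the internal white node of the disconnect piece, since it is a white spider just like the others). Hence the two H-boxes can be merged on every leg at once, and re-assembling the pieces gives a single H-box labelled $ab$ behind the same $\vec b$-labelled disconnect box, as required; Lemma~\ref{lem:index-and-disconnect-box-copy} is available if one instead prefers to first push the disconnect box past the white spiders before merging.

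The main obstacle is not any deep calculation but keeping the reductions honest while the diagrams are genuinely schematic: one must account uniformly, through the !-box notation, for H-boxes that carry extra legs, for white spiders that carry extra wires, and especially for the $b_i=0$ legs, where the H-boxes interact \emph{through} the node internal to the disconnect piece rather than directly. If that bookkeeping threatens to become unwieldy, a clean fallback is to observe that, once the !-boxes are instantiated, both sides have bounded arity and depend only on the free parameters $a$ and $b$, so by the fact that the normalisation procedure of Ref.~\cite{backens2018zh} extends to diagrams with free parameters the identity can instead be verified by bringing both sides to normal form, exactly as in the proof of Lemma~\ref{lem:2qubit-AND-to-XOR}.
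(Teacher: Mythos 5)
Your route is genuinely different from the paper's: the paper disposes of the first equation by simply citing Lemma~2.3 of Ref.~\cite{backens2018zh} and declares that the disconnect-box version ``follows immediately,'' whereas you attempt an internal derivation from the rules of Figure~\ref{fig:ZH-rules}. Your treatment of the \emph{in particular} clause (unfold the disconnect boxes via \eqref{eq:disconnect-box-dfn}, note that the $b_i=1$ legs reduce to the first equation and the $b_i=0$ legs present both H-boxes with identical terminating pieces that can be factored out) is a reasonable expansion of the paper's ``follows immediately,'' and your fallback --- instantiate the !-boxes and verify by bringing both sides to normal form, using the fact that the normalisation procedure of Ref.~\cite{backens2018zh} tolerates free parameters --- is sound and is exactly the justification this paper invokes for Lemma~\ref{lem:copy-AND-id}, so the lemma is certainly established by your argument taken as a whole.

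The one step I would not let stand as written is the appeal to ``the derived multi-leg form obtained from (M) by iteration and (ZS1).'' The rule (M) applies only to a pair of \emph{arity-one} H-boxes meeting at a white multiplication node; once the H-boxes have several legs attached to several spiders, no amount of spider fusion produces a subdiagram matching the left-hand side of (M), so there is nothing to iterate --- the ``multi-leg form of (M)'' is precisely Lemma~2.3 of Ref.~\cite{backens2018zh}, i.e.\ the statement you are proving. The actual derivation there goes through an additional ingredient you do not mention: decomposing an $n$-ary labelled H-box into a phase-free (AND-like) part composed with a single arity-one H-box carrying the label, copying the phase-free parts through the shared white spiders, and only then applying (M) to the two arity-one boxes. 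Either import that decomposition explicitly or lean on your normal-form fallback as the primary argument; as the paper's own handling of Lemma~\ref{lem:copy-AND-id} shows, the latter is the intended idiom here.
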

\begin{proof}
The left equation is Lemma 2.3 of Ref.~\cite{backens2018zh} and the right follows immediately.
\end{proof}

\begin{lemma}\label{lem:Pgadget-same-control-combination}
Phase gadgets connected to an identical set of white spiders can be combined. In particular, phase gadgets connected through an identically labeled disconnect box can be combined:
\[\tikzfig{Pgadget-same-control-combination}\]
\end{lemma}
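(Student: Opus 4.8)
The plan is to reduce the statement to its H-box analogue, Lemma~\ref{lem:Hbox-same-control-combination}, via the Fourier expansion of phase gadgets (Proposition~\ref{lem:XOR-to-AND-rule}); this keeps the argument in the spirit of the rest of the paper --- translate a phase-gadget question into an H-box question, and back.

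First I would handle the main equation, for two phase gadgets $\Phi_\alpha$ and $\Phi_\beta$ attached to the same white spiders $w_1,\dots,w_k$. Applying Proposition~\ref{lem:XOR-to-AND-rule} to each one rewrites $\Phi_\alpha$ as the product, over the non-empty subsets $T\subseteq\{1,\dots,k\}$, of an exponentiated H-box with label $e^{i\alpha(-2)^{|T|-1}}$ connected to the white spiders $\{\,w_i : i\in T\,\}$, and similarly for $\Phi_\beta$. In the stacked diagram the two white spiders sitting on each wire $i$ are joined by a bare wire, so spider fusion (ZS1) merges them; for each $T$ we are then left with two exponentiated H-boxes on the same set of (merged) white spiders, which combine by Lemma~\ref{lem:Hbox-same-control-combination} into a single exponentiated H-box with label $e^{i\alpha(-2)^{|T|-1}}e^{i\beta(-2)^{|T|-1}}=e^{i(\alpha+\beta)(-2)^{|T|-1}}$. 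The resulting diagram is exactly the Proposition~\ref{lem:XOR-to-AND-rule} expansion of a single phase gadget of phase $\alpha+\beta$, so folding it back up with Proposition~\ref{lem:AND-to-XOR-rule} closes this case.

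The ``in particular'' claim follows in the same way, now invoking the disconnect-box half of Lemma~\ref{lem:Hbox-same-control-combination}: expanding both gadgets by Proposition~\ref{lem:XOR-to-AND-rule} turns a phase gadget through a disconnect box labelled $\vec b$ into H-boxes through a disconnect box labelled $\vec b$, and Lemma~\ref{lem:Hbox-same-control-combination} combines the corresponding pairs. (Equivalently and more directly: since both gadgets use disconnect boxes with the same label $\vec b$, the same wires are severed in both --- the white spiders $w_i$ with $b_i=0$ are cut off from both grey hubs, and those with $b_i=1$ are joined to both by bare wires --- so the claim reduces to the main equation applied to the support $\{\,w_i : b_i=1\,\}$, after reinstating a disconnect box labelled $\vec b$ around the merged gadget.)

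I do not expect a real obstacle: the only content beyond Lemma~\ref{lem:Hbox-same-control-combination} and the already-proved Fourier propositions is the additivity of the exponents $\alpha(-2)^{|T|-1}$, which is immediate. A self-contained alternative avoiding Proposition~\ref{lem:XOR-to-AND-rule} would prove the main equation directly in the ZX style: after fusing the two grey hubs' white neighbours with (ZS1), merge the two hubs using strong complementarity --- the bialgebra rule (BA1) together with (ZS1) and (XS1), as in the usual proof of phase-gadget fusion~\cite{kissinger2019tcount} --- so that their phase legs amalgamate and the phases add. The delicate step on that route is the bookkeeping of the hubs' and white spiders' external legs while applying (BA1), which is precisely why the H-box detour above is cleaner. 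As elsewhere in the paper, scalar factors are ignored.
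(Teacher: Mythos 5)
Your main route is circular as written. You close the argument by ``folding it back up with Proposition~\ref{lem:AND-to-XOR-rule}'', but the paper's proof of that proposition (Appendix~\ref{app:AND-to-XOR}) itself invokes Lemma~\ref{lem:Pgadget-same-control-combination} --- it is exactly how the three asterisked phase gadgets are merged into the fourth in the inductive step. So you cannot use Proposition~\ref{lem:AND-to-XOR-rule} to prove this lemma. (Citing Proposition~\ref{lem:AND-to-XOR-rule} here is also the wrong proposition for the job: what you actually need in the last step is Proposition~\ref{lem:XOR-to-AND-rule} read right-to-left, to recognise the combined family of exponentiated H-boxes as the expansion of a single gadget with phase $\alpha+\beta$. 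That repair removes the most blatant circularity, but it still makes the lemma depend on a proposition whose own inductive proof relies on these same combination lemmas, so the dependency order would need to be checked carefully rather than assumed.)

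The paper avoids all of this: its proof is the one-liner you relegate to a ``self-contained alternative'' at the end --- the left equation follows directly from (BA1) and (M) (fuse the white spiders, apply the bialgebra law to merge the two grey hubs, and add the phases with the multiply rule), and the disconnect-box version is an immediate consequence. These combination lemmas are deliberately proved from the axioms alone precisely so that the Fourier expansion propositions can use them; routing the proof through the Fourier expansion inverts the intended order of dependencies. Your bookkeeping of the labels $e^{i(\alpha+\beta)(-2)^{|T|-1}}$ and the reduction of the disconnect-box case to the support $\{w_i : b_i = 1\}$ are both fine in themselves, but the direct (BA1)+(M) argument is the one that actually works within the paper's logical structure.
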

\begin{proof} The left equation follows from the ZH rules (BA1) and (M) and again the right equation is an immediate consequence.
\end{proof}




\noindent We can now state the full version of the graphical Fourier transform:

\begin{theorem}\label{prop:fourier-transform} For all $\vec{b}, \vec{c} \in \mathbb{B}^n$, let $\alpha_{\vec{c}}$ and $\widetilde{\alpha}_{\vec{c}}$ be related according to
\begin{alignat*}{5}
    &\alpha_{\vec{b}} =&& && \sum_{\vec{c}\in \mathbb B^n} \widetilde{\alpha}_{\vec{c}} \ && \Omega (\vec{b},\vec{c})\quad , \qquad && \Omega (\vec{b},\vec{c})=b_1c_1\oplus \dots \oplus b_nc_n \\
    &\widetilde{\alpha}_{\vec{c}} =&& \frac{-1}{2^{n-1}} && \sum_{\vec{b}\in \mathbb B^n} \alpha_{\vec{b}} \ && \chi (\vec{b},\vec{c})\quad , \qquad &&\chi (\vec{b},\vec{c})=(-1)^{\vec{b}\cdot \vec{c}}
\end{alignat*}
then the following equation holds:
\begin{equation}\tag{FT}\label{eq:fourier-transform}
    \tikzfig{Fourier-transform-prop}    
\end{equation}
\end{theorem}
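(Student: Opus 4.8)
The plan is to exploit the fact that both sides of \eqref{eq:fourier-transform} are diagonal in the computational basis (the white spiders only copy their inputs), so that the equation is morally the identity of the two semi-Boolean functions $\vec b \mapsto \alpha_{\vec b}$ and $\vec b \mapsto \sum_{\vec c}\widetilde{\alpha}_{\vec c}\,\Omega(\vec b,\vec c)$ — but to \emph{derive} it inside the ZH-calculus rather than appeal to completeness. The strategy is to rewrite both sides into one and the same canonical shape: a product of exponentiated H-boxes, one for each nonempty $\vec d \in \mathbb{B}^n$, with the $\vec d$-box connected to precisely the output spiders in the support of $\vec d$. Once both sides are in that shape, \eqref{eq:fourier-transform} reduces to matching the phase on each $\vec d$-box, which is a short bit-string computation.

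For the right-hand side I would start from the product of phase gadgets, the $\vec c$-gadget carrying phase $\widetilde{\alpha}_{\vec c}$ and attached through a grey disconnect box \eqref{eq:disconnect-box-dfn} labelled $\vec c$, and apply Proposition~\ref{lem:XOR-to-AND-rule} to each one. This turns the $\vec c$-gadget into a product of exponentiated H-boxes indexed by the nonempty $\vec d \subseteq \vec c$, the $\vec d$-box on the support of $\vec d$ with phase $(-2)^{|\vec d|-1}\widetilde{\alpha}_{\vec c}$ — the diagrammatic form of $\bigoplus_i x_i = \sum_{\varnothing\neq\vec d}(-2)^{|\vec d|-1}\prod_i x_i$. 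Then, for each fixed nonempty $\vec d$, I would fuse the boxes that landed on the support of $\vec d$ (one for every $\vec c \supseteq \vec d$) into a single exponentiated H-box with phase $(-2)^{|\vec d|-1}\sum_{\vec c\supseteq\vec d}\widetilde{\alpha}_{\vec c}$, using Lemma~\ref{lem:Hbox-same-control-combination}; the disconnect-box manipulations needed to bring the boxes into a fuseable position are exactly what Lemmas~\ref{lem:index-and-disconnect-box-copy} and~\ref{lem:index-box-eqs} supply.

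For the left-hand side I would take the normal form \eqref{eq:NF} with exponentiated H-boxes: for each $\vec b$, an exponentiated H-box of phase $\alpha_{\vec b}$ attached through an indexing box \eqref{eq:indexing-box-dfn} labelled $\vec b$, i.e. with negations \eqref{eq:X-dfn} on the complement of $\vec b$. Absorbing those negations (Lemma~\ref{lem:NOT-Hbox-negation}) and then ``completing the connections'' (Lemma~\ref{lem:completing-Hbox-connections}) re-expresses this piece as a product of exponentiated H-boxes indexed by $\vec d \supseteq \vec b$, the $\vec d$-box on the support of $\vec d$ with phase $(-1)^{|\vec d|-|\vec b|}\alpha_{\vec b}$ — the diagrammatic form of $\prod_{i\in\vec b}x_i\prod_{i\notin\vec b}(1-x_i)=\sum_{\vec d\supseteq\vec b}(-1)^{|\vec d|-|\vec b|}\prod_i x_i$ — and merging over all $\vec b\subseteq\vec d$ via Lemma~\ref{lem:Hbox-same-control-combination} leaves one box per nonempty $\vec d$ with phase $\sum_{\vec b\subseteq\vec d}(-1)^{|\vec d|-|\vec b|}\alpha_{\vec b}$. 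Both sides now being products of exponentiated H-boxes over the same index set and on the same spiders, \eqref{eq:fourier-transform} reduces to the scalar equalities, for all $\varnothing\neq\vec d$,
\[
\sum_{\vec b\subseteq\vec d}(-1)^{|\vec d|-|\vec b|}\alpha_{\vec b}\;=\;(-2)^{|\vec d|-1}\sum_{\vec c\supseteq\vec d}\widetilde{\alpha}_{\vec c}\pmod{2\pi}.
\]
Substituting $\alpha_{\vec b}=\sum_{\vec e}\widetilde{\alpha}_{\vec e}\,\tfrac{1-(-1)^{\vec b\cdot\vec e}}{2}$ and swapping the order of summation, the inner sum over $\vec b\subseteq\vec d$ of the constant part vanishes because $\vec d\neq\vec 0$, while the inner sum of the $(-1)^{\vec b\cdot\vec e}$ part equals $2^{|\vec d|}$ exactly when $\vec d\subseteq\vec e$ and $0$ otherwise; this collapses the double sum to $(-2)^{|\vec d|-1}\sum_{\vec e\supseteq\vec d}\widetilde{\alpha}_{\vec e}$, proving the identity. (That the displayed $\alpha\leftrightarrow\widetilde{\alpha}$ pair is self-consistent, so \eqref{eq:fourier-transform} may be read either way, is the usual inversion of the $\pm1$ Hadamard transform on $\mathbb{B}^n$, valid up to the $\vec d=\vec 0$ term, which is a global scalar and hence ignored.)

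The main obstacle is not the coefficient identity, which is routine finite Fourier analysis, but the disconnect- and indexing-box bookkeeping: one must check that after applying Propositions~\ref{lem:AND-to-XOR-rule}/\ref{lem:XOR-to-AND-rule} the resulting exponentiated H-boxes genuinely slide into a common position and fuse under Lemma~\ref{lem:Hbox-same-control-combination}, which is precisely what Lemmas~\ref{lem:index-and-disconnect-box-copy}--\ref{lem:completing-Hbox-connections} are set up to guarantee. A secondary subtlety is the empty-string term, responsible for the $-1/2^{n-1}$ normalisation and for the fact that the two transform formulae are mutually inverse only modulo a scalar. (Alternatively, since both sides land in ZH normal form, one could finish by the completeness-based check on the $2^n$ computational basis states, exactly as for Lemma~\ref{lem:2qubit-AND-to-XOR}.)
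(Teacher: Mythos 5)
Your proposal is sound, but it is organised differently from the paper's proof. The paper gives two end-to-end diagrammatic derivations: a forward pass that starts from the left-hand side of \eqref{eq:fourier-transform}, expands each exponentiated H-box via Proposition~\ref{lem:AND-to-XOR-rule} and merges the resulting phase gadgets with Lemma~\ref{lem:Pgadget-same-control-combination} to land exactly on the right-hand side, and a separate backward pass that expands each phase gadget via Proposition~\ref{lem:XOR-to-AND-rule}, merges H-boxes with Lemma~\ref{lem:Hbox-same-control-combination}, and closes with the combinatorial identity $\sum_{\vec d\subseteq\vec b\ast\vec c}(-2)^{|\vec d|-1}=\Omega(\vec b,\vec c)$ of Appendix~\ref{app:XOR-sum}. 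You instead meet in the middle at the ``monomial basis'' (one exponentiated H-box per nonempty $\vec d$, sitting on the support of $\vec d$) and reduce the theorem to a single coefficient identity, which you verify by Hadamard/M\"obius inversion; you never invoke Proposition~\ref{lem:AND-to-XOR-rule} at all. What your route buys is a cleaner separation between the diagrammatic bookkeeping and the arithmetic, and it makes the role of the $\vec d=\vec 0$ term (hence the $-1/2^{n-1}$ normalisation) explicit; what the paper's route buys is that each direction of the rewrite is exhibited as an actual derivation usable in calculations, and that only identities already catalogued in the appendices are needed. Two caveats on your version: (i) the left-hand-side step --- expanding an exponentiated H-box composed with an indexing box into boxes on all $\vec d\supseteq\vec b$ with phases $(-1)^{|\vec d|-|\vec b|}\alpha_{\vec b}$ --- is not literally Lemma~\ref{lem:NOT-Hbox-negation} or Lemma~\ref{lem:completing-Hbox-connections}; it is a genuine extra lemma (provable by peeling off one negation at a time, or by the same normal-form meta-argument the paper uses for Lemma~\ref{lem:NOT-Hbox-negation}), and you should state and prove it rather than gesture at it; (ii) in the final computation the weighted inner sum $\sum_{\vec b\subseteq\vec d}(-1)^{|\vec d|-|\vec b|}(-1)^{\vec b\cdot\vec e}$ equals $(-2)^{|\vec d|}$ when $\vec d\subseteq\vec e$, not $2^{|\vec d|}$; your final coefficient $(-2)^{|\vec d|-1}\sum_{\vec e\supseteq\vec d}\widetilde{\alpha}_{\vec e}$ is nonetheless correct once the overall $-\tfrac12$ is tracked.
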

In particular, when the Fourier coefficients $\widetilde\alpha_{\vec{c}}$ are real, equation~\eqref{eq:ft-intro} from the introduction is recovered. Note that these coefficients differ from the standard coefficients found in the Fourier transform of semi-Boolean functions (e.g. in~\cite{o2014analysis}) by a factor $-2$. This is a consequence of
the phase convention adopted by the ZX-calculus, which affects the behaviour of phase gadgets, up to a global phase. For a more detailed explanation see Appendix~\ref{app:fourier-semi-boolean}.  

\begin{proof}
We will show both the forward direction and the backwards direction. First, from left to right: \\ 
\medskip

{\centering 
\scalebox{0.75}{\tikzfig{Fourier-transform-pf1}} \\}
{\centering 
\scalebox{0.75}{\tikzfig{Fourier-transform-pf2}} \\}

And now from right to left. To do so we need a new operation on bitstrings. We write $\vec{b}*\vec{c}$ to denote the \emph{Schur product}, i.e.\ pairwise AND of $\vec{b}$ and $\vec{c}$. Note that $\vec{d}\subseteq \vec{b},\vec{c}$ if and only if $\vec{d} \subseteq \vec{b}*\vec{c}$. \\
{\centering
\scalebox{0.75}{\tikzfig{Inv-Fourier-transform-pf1}} \\
\scalebox{0.75}{\tikzfig{Inv-Fourier-transform-pf2}} \\
} \medskip
Here we have used that $\sum\limits_{ \vec{d} \subseteq \vec{b}\ast\vec{c}} (-2)^{\mid \vec{d}\mid -1} = \Omega(\vec{b},\vec{c})$ , the proof of which can be found in appendix \ref{app:XOR-sum}.
\end{proof}

\section{Translation between ZX- and ZH-diagrams}
\label{s:ZX-ZH-transform}
The Fourier transform law relates both to the ZX- and ZH-calculus. The left-hand side of \eqref{eq:fourier-transform} resembles the normal-form derived previously by Backens and Kissinger \eqref{eq:NF}, while the right hand side of \eqref{eq:fourier-transform} can be transformed to a ZX-diagram when $\widetilde{\alpha}_{\vec{b}}\in\mathbb{R}$ for all $ \vec{b}\in\mathbb B^n$. Since we used exponentiated H-boxes in the Fourier transform, we cannot express all ZH normal-forms. Nevertheless, we can get arbitrarily close:
\begin{theorem}
Any ZH normal form can be approximated by a Fourier transform.
\end{theorem}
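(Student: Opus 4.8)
The plan is to pin down exactly which ZH normal forms are \emph{literally} a Fourier transform, and then approximate a general one by these. First recall from \eqref{eq:NF} that a ZH normal form on $n$ outputs with data $\{a_{\vec b}\mid\vec b\in\mathbb B^n\}$ denotes the state $\sum_{\vec b\in\mathbb B^n}a_{\vec b}\ket{\vec b}$, the H-box label $a_{\vec b}$ being exactly the $\ket{\vec b}$-component. The left-hand side of \eqref{eq:fourier-transform} is the normal form in which each ordinary H-box is replaced by an exponentiated one, so its $\vec b$-th label is forced to be of the form $e^{i\alpha_{\vec b}}$ with $\alpha_{\vec b}\in\mathbb C$, cf.\ \eqref{eq:phase-spider-dfn}; indeed, by \eqref{eq:fourier-transform} it equals the phase-gadget diagram on the right, which denotes $\sum_{\vec b}\exp\bigl(i\sum_{\vec c}\widetilde\alpha_{\vec c}\,\Omega(\vec b,\vec c)\bigr)\ket{\vec b}=\sum_{\vec b}e^{i\alpha_{\vec b}}\ket{\vec b}$. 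Since $\{e^{i\alpha}\mid\alpha\in\mathbb C\}=\mathbb C\setminus\{0\}$, the first step I would carry out is simply to record that, as $\{\alpha_{\vec b}\}$ ranges over $\mathbb C^{2^n}$, the left-hand side of \eqref{eq:fourier-transform} denotes precisely the states $\sum_{\vec b}a_{\vec b}\ket{\vec b}$ whose every amplitude $a_{\vec b}$ is nonzero; equivalently, the ZH normal forms that are literally a Fourier transform are exactly those with nowhere-vanishing computational-basis amplitudes.

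Next I would perform the approximation. Fix a normal form with data $\{a_{\vec b}\}$ and, for $k\in\mathbb N$, put $a_{\vec b}^{(k)}:=a_{\vec b}$ whenever $a_{\vec b}\neq 0$ and $a_{\vec b}^{(k)}:=\tfrac1k$ whenever $a_{\vec b}=0$. Each tuple $\{a_{\vec b}^{(k)}\}$ is nowhere vanishing, so by the first step the normal form with this data is the left-hand side of \eqref{eq:fourier-transform} for $\alpha_{\vec b}^{(k)}:=-i\log a_{\vec b}^{(k)}$ (any fixed branch of the logarithm); feeding these into Theorem~\ref{prop:fourier-transform}, together with the associated $\widetilde\alpha_{\vec c}^{(k)}=\tfrac{-1}{2^{n-1}}\sum_{\vec b}\alpha_{\vec b}^{(k)}\chi(\vec b,\vec c)$, exhibits it as a genuine instance of the Fourier transform \eqref{eq:fourier-transform}. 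As $k\to\infty$ we have $a_{\vec b}^{(k)}\to a_{\vec b}$ for every $\vec b$, and since the ambient space $(\mathbb C^2)^{\otimes n}$ has the fixed finite dimension $2^n$, the denoted vectors converge in norm; hence the given normal form is a limit of Fourier transforms, i.e.\ for every $\varepsilon>0$ some Fourier transform lies within $\varepsilon$ of it. Finally, since cups and caps turn every input of a ZH normal form into an output, it suffices to treat diagrams with outputs only, which is exactly what the argument does.

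I do not expect a serious obstacle: this is a density argument whose substantive content already lives in Theorem~\ref{prop:fourier-transform}. The one point needing care is the exact-expressibility claim of the first step — that the left-hand side of \eqref{eq:fourier-transform} denotes \emph{exactly} the nowhere-vanishing states and nothing more — which I would justify by combining the amplitude reading of \eqref{eq:NF} with the fact that an exponentiated H-box realises an H-box of any label in $\mathbb C\setminus\{0\}$ (should one prefer to read that diagram through the phase-gadget expansion of Proposition~\ref{lem:AND-to-XOR-rule}, the same conclusion drops out after a routine Möbius inversion relating the exponents $\alpha_{\vec b}$ to the amplitudes). It is also worth flagging in the statement that the approximant produced above is a \emph{literal} ZX-diagram only when all $\widetilde\alpha_{\vec c}^{(k)}$ are real — equivalently, when all $a_{\vec b}^{(k)}$ are unimodular; for a general target the approximating Fourier transform carries complex phase-gadget angles, so it is a bona fide instance of \eqref{eq:fourier-transform} but not literally a ZX-diagram.
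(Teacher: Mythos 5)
Your proposal is correct and follows essentially the same route as the paper: both replace each vanishing amplitude $a_{\vec b}=0$ by a small nonzero value (the paper uses $\varepsilon$ directly, you use $1/k$ and pass to the limit), set $\alpha_{\vec b}=-i\ln a_{\vec b}$ for the nonzero entries, and then invoke Theorem~\ref{prop:fourier-transform}. The extra care you take — characterising the exactly-expressible normal forms as those with nowhere-vanishing amplitudes, and noting norm convergence in the finite-dimensional ambient space — is a welcome tightening of the paper's terser argument but not a different proof.
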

\begin{proof}
Since $\alpha_{\vec{b}}$ and $\widetilde{\alpha}_{\vec{c}}$ in \eqref{eq:fourier-transform} are allowed to be complex numbers, any H-box labeled by $a\in\mathbb C/0$ can be expressed as an exponentiated H-box with label $\alpha=-i\ln{a}$. Having $a=0$ as the only singular value, any H-box can be approximated up to arbitrary precision $\varepsilon$. In particular, every H-box in \eqref{eq:NF} can be approximated, which yields a diagram that is suitable to be transformed according to \eqref{eq:fourier-transform}.

\begin{equation*}
    \tikzfig{NF-to-approxNF-to-FT}
\end{equation*}
\[\text{Where }\ \ \alpha_{\vec{b}} = \begin{cases}
-i\ln{(a_{\vec{b}})} &\text{if $a_{\vec{b}}\neq 0$}\\
-i\ln{(\varepsilon)} &\text{if $a_{\vec{b}} = 0$}
\end{cases}\qquad \text{and}\quad \widetilde{\alpha}_{\vec{c}} = \frac{-1}{2^{n-1}} \sum_{\vec{b}\in \mathbb B^n} \alpha_{\vec{b}} \ \chi (\vec{b},\vec{c}) \qedhere\]
\end{proof}
 
Note that in some cases, this approximation approach can lead to an exact result, like for the cup:
\[\tikzfig{cup-fourier-expansion}\]

\begin{theorem}
A ZH normal-form characterized by a vector $\ket{\vec{a}}$ that only contains phases, i.e.\ $a_{\vec{b}}=e^{i\alpha_{\vec{b}}}$ with $\alpha_{\vec{b}} \in \mathbb R$ for all $ \vec{b}\in\mathbb B^n$, can be expressed as a ZX-diagram by applying a Fourier transform.
\end{theorem}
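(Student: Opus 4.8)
The plan is to observe that, under the stated hypothesis, the ZH normal form is \emph{exactly} (not merely approximately) the left-hand side of the Fourier transform law \eqref{eq:fourier-transform}, to apply that law, and then to check that the resulting diagram is already written in the ZX fragment.

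First I would note that if $a_{\vec b}=e^{i\alpha_{\vec b}}$ with $\alpha_{\vec b}\in\mathbb R$, then by the definition of the exponentiated H-box in \eqref{eq:phase-spider-dfn} each H-box occurring in the normal form \eqref{eq:NF} is precisely an exponentiated H-box carrying the \emph{real} label $\alpha_{\vec b}=-i\ln a_{\vec b}$. Thus, in contrast to the previous theorem, no $\varepsilon$-approximation is required: the normal-form diagram coincides on the nose with the left-hand side of \eqref{eq:fourier-transform} for this choice of the $\alpha_{\vec b}$.

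Next I would apply Theorem~\ref{prop:fourier-transform}, rewriting the diagram as the right-hand side of \eqref{eq:fourier-transform}: a family of phase gadgets whose phases are the Fourier coefficients $\widetilde\alpha_{\vec c}=\frac{-1}{2^{n-1}}\sum_{\vec b\in\mathbb B^n}\alpha_{\vec b}\,\chi(\vec b,\vec c)$. Since $\chi(\vec b,\vec c)=(-1)^{\vec b\cdot\vec c}\in\{-1,1\}$ and every $\alpha_{\vec b}$ is real, each $\widetilde\alpha_{\vec c}$ is real. It then remains to recognise that this right-hand side is a ZX-diagram: the white spiders are phase-free Z-spiders, the indexing- and disconnect-box structure expands into X-spiders and $\pi$-phase X-spiders (NOT gates) via the translations already established in Section~\ref{s:ZH-dfn}, and a phase gadget with real phase $\widetilde\alpha_{\vec c}$ is by construction an X-spider with a phase-$\widetilde\alpha_{\vec c}$ Z-spider leg --- all of which are ZX generators. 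Up to the global scalars we are ignoring, this presents the normal form as a ZX-diagram.

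The only point requiring care --- and hence the step I would treat as the main obstacle --- is confirming that after the Fourier expansion the diagram truly contains no residual H-boxes and no non-real phases. The absence of H-boxes is guaranteed because the Fourier transform law consumes every exponentiated H-box into phase gadgets; the reality of the phases is the short computation above, resting on $\alpha_{\vec b}\in\mathbb R$ together with $\chi$ taking values $\pm 1$. Everything else is routine bookkeeping with the generator translations recalled earlier in the paper.
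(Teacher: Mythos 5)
Your proposal is correct and follows essentially the same route as the paper: both observe that real $\alpha_{\vec b}$ forces the Fourier coefficients $\widetilde\alpha_{\vec c}$ to be real, apply the Fourier transform law \eqref{eq:fourier-transform} to the (exactly, not approximately, exponentiated) normal form, and note that the resulting phase-gadget diagram lies in the ZX fragment. Your additional remarks on expanding the indexing- and disconnect-boxes into ZX generators are just a more explicit spelling-out of what the paper leaves implicit in its final diagram.
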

\begin{proof}
Since $\alpha_{\vec{b}}\in\mathbb{R}$, we also have $\widetilde{\alpha}_{\vec{b}}\in\mathbb R$. One can then write
\[\tikzfig{ZH-FT-towards-ZX}\]
Where the final diagram is a ZX-diagram, and
\[a_{\vec{b}} = e^{i\alpha_{\vec{b}}} \qquad \text{with } \  \alpha_{\vec{b}} \in \mathbb R \qquad \text{and} \ \  \widetilde{\alpha}_{\vec{c}} = \frac{-1}{2^{n-1}} \sum_{\vec{b}\in \mathbb B^n} \alpha_{\vec{b}} \ \chi (\vec{b},\vec{c}) \qedhere\]
\end{proof}

As an example, for the standard 3-ary H-box, the coefficient $\alpha_{111}=\pi$ while all other $\alpha_{\vec{b}}$ vanish. Using this we get a 7 T-gate representation of the CCZ gate:
\[\tikzfig{CCZ-transform}\]
The last equality here follows by decomposing the parity phase gates in a smart way, see e.g.\ \cite{amy2014polynomial}.

\section{Reducing non-Clifford count in Toffoli circuits}\label{s:tcount}
In this section we will see how we can use the Fourier transform to reproduce and generalize established results regarding Toffoli circuits and the amount of non-Clifford resources, specifically T-gates, needed to implement them. Since we will be dealing with circuits in this section, we will write our diagrams from left to right, to mimic quantum circuit notation. We will use the translations of phase gates into ZH-diagrams given in equation \eqref{eq:phase-gate-as-circuit}.

Suppose we have a circuit with a Toffoli gate, and a controlled $S^\dagger$ gate, then using the Fourier transform we can derive the following equation:
\begin{equation}\label{eq:had-phase-cancel}
    \tikzfig{had-phase-cancel}
\end{equation}
Whereas naively implementing these gates would require 10 T gates, this cancellation gives the well-known 4 T gate realisation of the Toffoli$^*$ gate~\cite{selinger2013quantum}. We can use this type of cancellation in other settings, by using the following rewrite rule:
\begin{equation}\label{eq:had-phase-rewrite}
    \tikzfig{had-phase-rewrite}
\end{equation}
As we can see, when an X$(\alpha)$-rotation is commuted through an $(n+1)$-ary H-box, it becomes an $n$-controlled Z($\alpha$)-rotation. Similarly to \eqref{eq:had-phase-cancel}, when $\alpha=-\frac{\pi}{2}$, some of the phases present in the Fourier transforms of the gates will cancel.

We can use \eqref{eq:had-phase-cancel} and \eqref{eq:had-phase-rewrite} to derive the 4 T gate plus one ancilla implementation of the Toffoli gate of Ref.~\cite{jones2013low}:
\begin{equation}\label{eq:tof-ancilla}
    \scalebox{0.85}{\tikzfig{tof-ancilla}}
\end{equation}

Using the same technique, this construction generalises to an implementation of an $n$-control Toffoli that uses $4(n-1)$ $Z(\pi/2^n)$-rotation gates, and one ancilla. Unfortunately, in that equation we had post-selected the ancilla to the $\ket{+}$ state. When we actually measure it, and the outcome is $\ket{-}$, we need to apply a correction:
\begin{equation}\label{eq:tof-ancilla-correct}
    \tikzfig{tof-ancilla-correct}
\end{equation}
We see that the measurement outcome is corrected by applying a CZ gate, which is Clifford, and hence not expensive to implement. If we were however implementing a $n$-control Toffoli gate, then the correction would be a $(n-1)$-control Z gate, which will require significant resources to be implemented as well.

Finally, let us derive a version of the trick used in Ref.~\cite{gidney2018halving}, that shows how to implement a compute-uncompute pair of Toffoli's using just 4 T-gates:
\begin{equation*}
    \tikzfig{gidney-rewrite}
\end{equation*}
Again, this diagram represents a post-selected measurement outcome for the ancillae. Using a similar procedure as in \eqref{eq:tof-ancilla-correct}, the measurement outcome can be corrected by a CZ gate on the top 2 qubits, and a Z gate on the bottom qubit. It should be clear how this procedure generalises to an $n$-control Toffoli compute-uncompute pair. This would again require $4(n-1)$ Z$(\pi/2^n)$-rotation gates to implement, but now the correction becomes a $(n-1)$-control Z gate, which again is non-Clifford when $n>2$.

\section{Conclusion}
We have shown that the Fourier transform of semi-Boolean functions can be applied to ZH-diagrams, leading to a translation between the ZX-calculus and the ZH-calculus. We have then used the Fourier transform to diagrammatically derive and generalise some of the tricks used in Toffoli circuits to reduce the amount of non-Clifford resources used by those circuits.

The next step is to build on the combined ZX- and ZH-diagrammatic reasoning used in Section~\ref{s:tcount} to derive identities for circuits containing Toffoli gates that were not previously known. By doing so we might be able to find circuit simplifications that would be hard, if not impossible, to find using just regular circuit diagrams or even using the ZX-calculus. Furthermore, we aim to extend the particular examples given in this paper to more general patterns, which could be used as automated circuit optimisation strategies, e.g. within the PyZX circuit optimisation tool~\cite{pyzx}.

\noindent \textbf{Acknowledgements.} AK and JvdW are supported in part by AFOSR grant FA2386-18-1-4028.

\newpage

\bibliographystyle{eptcs}
\bibliography{main}

\appendix

\section{The Fourier transform of semi-Boolean functions}\label{app:fourier-semi-boolean}
As shown in the introduction, any semi-Boolean function can be transformed to yield a Fourier spectrum~\cite{o2014analysis}:
\[f(\vec{b})=\alpha_{\vec{b}}= -\frac12 \sum_{c\in\mathbb B^n}\widetilde{\alpha}_{\vec{c}} \chi(\vec{b},\vec{c})\]
where $\chi(\vec{b},\vec{c}):=(-1)^{\vec{b}\cdot\vec{c}}$ is the parity function. To find the coefficients $\widetilde{\alpha}_{\vec{c}}$ one can take the inner product between $\alpha_{\vec{b}}$ and $\chi(\vec{b},\vec{c})$. This works because the parity functions are orthonormal on this inner product:
\[
-\frac12 \widetilde{\alpha}_{\vec{c}} = \langle f(\vec{b}),\chi(\vec{b},\vec{c}) \rangle
\]
hence:
\begin{eqnarray*}
\widetilde{\alpha}_{\vec{c}}
&=& -2 \langle f(\vec{b}),\chi(\vec{b},\vec{c}) \rangle\\
&=& -2 \cdot \frac{1}{2^n}\sum_{\vec{b}\in\mathbb B^n}f(\vec{b})\chi(\vec{b},\vec{c}) \\
&=& \frac{-1}{2^{n-1}}\sum_{\vec{b}\in\mathbb B^n}\alpha_{\vec{b}}\chi(\vec{b},\vec{c})
\end{eqnarray*}


The reason we get a $-1/2$ pre-factor in the Fourier expansion of $f$ comes from the conventions of the ZX-calculus. An $\alpha$-labelled phase gadget introduces a $0$ phase to even paritites of $\vec{c}$ and an $\alpha$ phase to odd-labelled parities. However, this is equal, up to a global phase, to introducing a $-\alpha/2$ phase to even parities and an $\alpha/2$ phase to odd parities. We can show this for the function $f$ by relating $\chi$ to the `other' parity function $\Omega(\vec{b},\vec{c}):= \vec{b} \cdot \vec{c} = b_1c_1\oplus \dots \oplus b_n c_n$ as the parity function of our forward Fourier transform. This modifies the coefficients appearing in the Fourier spectrum. Noting that $\chi(\vec{b},\vec{c})=1-2\Omega (\vec{b},\vec{c})$, we can write:
\begin{eqnarray*}
\alpha_{\vec{b}}
&=& -\frac12 \sum_{\vec{c}\in\mathbb B^n}\widetilde{\alpha}_{\vec{c}} (1-2\Omega (\vec{b},\vec{c})) \\
&=& -\frac12 \sum_{\vec{c}\in\mathbb B^n}\widetilde{\alpha}_{\vec{c}}\ +\sum_{c\in\mathbb B^n}\widetilde{\alpha}_{\vec{c}}\Omega (\vec{b},\vec{c})) \\
&\approx& \sum_{c\in\mathbb B^n}\widetilde{\alpha}_{\vec{c}}\Omega (\vec{b},\vec{c}))
\end{eqnarray*}



\section{Properties of the disconnect boxes}\label{app:disconnect-boxes}


In this appendix we prove the lemmas concerning the indexing and disconnect boxes.



\begin{proof}[Proof of Lemma~\ref{lem:index-and-disconnect-box-copy}.]
The copying of the indexing box is proven in \cite{backens2018zh}. For the disconnect boxes we prove the one qubit case, which straightforwardly generalizes to multiple qubits. We note that when $b=1$, the boxes are the identity so it is trivial. When $b=0$:
\[\tikzfig{disconnect-piece-grey-copy-pf}\]
\[\tikzfig{disconnect-piece-white-copy-pf}\]
\end{proof}

\begin{proof}[Proof of Lemma~\ref{lem:index-box-eqs}(i)]
For every connection, there are 4 possibilities: $(c_i,d_i)\in\{00,01,10,11\}$. When writing each of these out explicitly side by side, one finds
\[\tikzfig{gray-white-disconnect-hbox-erase-pf}\]
We conclude that two disconnect boxes composed with an H-box as depicted in Lemma~\ref{lem:index-box-eqs}(i)
behave like just a single white disconnect box, except when the H-box gets deleted. This occurs when $c_i<d_i$ for some $i$. Equivalently, the H-box is \emph{not} deleted if $\forall \ i \in \{1,\ldots,n\} \ c_i \geq d_i$ and thus $\vec{c} \supseteq\vec{d}$
\end{proof}

\begin{proof}[Proof of Lemma~\ref{lem:index-box-eqs}(ii)]
Again we start by evaluating one connection only. When either $b_i=1$ or $c_i=1$, at least one of the boxes is the identity, and the commutation is trivial. When both $b_i=c_i=0$, the NOT gate gets deleted:
\[\tikzfig{indexing-disconnect-commute-pf}\]
So when there are $n$ wires, $\sum_{i=1}^{n} (1-b_i)c_i=\sum_{i=1}^{n} c_i-b_ic_i= \mid \vec{c}\mid-\vec{b}\cdot\vec{c}\ $ NOT gates remain after commuting the indexing box past the disconnect box. Since two NOT gates cancel, only one NOT gate remains if $(\mid \vec{c}\mid-\vec{b}\cdot\vec{c}) \text{ mod } 2=1$
\end{proof}

\begin{proof}[Proof of Lemma~\ref{lem:completing-Hbox-connections}.]
We prove this by immediately writing the left-hand side in normal-form, and then expanding the indexed !-box:
\[\tikzfig{completing-Hbox-connections-pf}\]
\end{proof}

\section{Case distinction}\label{app:case-distinction}
During the proofs presented in appendix \ref{app:AND-to-XOR} and \ref{app:XOR-to-AND}, it will be convenient to explicitly write out part of a labeled bang box, manipulate the result by applying some rules and then recollapse into a single expression. Explicitly, we use that the set of all n-bit strings can be defined inductively as the set of all (n-1)-bit strings appended with either a 0 or a 1.
\[ \mathbb{B}^n = \{\vec{b}b_n \mid \text{ for } \vec{b}\in\mathbb{B}^{n-1} \text{, } b_n\in\{0,1\}=\mathbb{B}^{1} \} \]
Graphically, we can make a similar case distinction on the least significant bit.
\begin{equation}\label{eq:case-distinction}
    \tikzfig{case-distinction-general}
\end{equation}
\noindent Where the cloud labeled by $D(\vec{b})$ represents any well defined diagram parametrized by $\vec{b}$ in some manner.

Specifically, we will encounter the case where $D(\vec{b})$ contains a disconnect box and an H-box parametrized by $\mid \vec{b} \mid$.

\[\tikzfig{case-distinction-example}\]
Note that we end up with two copies that are distinguished by their connection to one of the white spiders and the power appearing in the H-box. When $D(\vec{b})$ consists of a phase gadget connected to a grey disconnect box, the case distinction is implemented analogously.

\section{Proof of Proposition~\ref{lem:AND-to-XOR-rule}}\label{app:AND-to-XOR}
We will first need the following lemma.:
\begin{lemma}\label{lem:H-box-multiple-legs}
Multiple legs connecting an H-box to a white node are redundant and can be omitted or inserted at will.
\[\tikzfig{Hbox-multiple-legs}\]
\end{lemma}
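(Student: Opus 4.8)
The plan is to prove the statement by induction on the number $k$ of parallel legs joining the H-box to the white node, using Lemma~\ref{lem:copy-AND-id} as the engine that strips off one redundant leg at a time. When $k\leq 1$ there is nothing to show. For $k\geq 2$, I would first apply white-spider fusion (ZS1) to split the white node into two pieces: a $3$-ary white node $W'$ attached to the rest of the white node by a single leg and to the H-box by two of the $k$ legs, leaving the original white node with all of its external legs, one fresh leg to $W'$, and $k-2$ legs still running to the H-box. Now $W'$ is doubly connected to the H-box, which is exactly the configuration addressed by Lemma~\ref{lem:copy-AND-id}: the double connection collapses, $W'$ degenerates to a bare wire (re-absorbed into the white node by another use of (ZS1)), and the H-box loses one leg. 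The H-box is then joined to the white node by $k-1$ parallel legs, and the induction hypothesis finishes the job. Since every step is an equality, the same chain read in reverse realises the "inserted at will" direction.

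The one point that needs care is that Lemma~\ref{lem:copy-AND-id}, although naturally stated for a $3$-ary white node and an otherwise bare H-box, is applied here to an H-box that in general carries many additional legs (the remaining $k-2$ legs to the white node, plus whatever else the H-box is wired to). This is harmless: contracting the two legs of the H-box against the $\ket{jj}$ emitted by $W'$ replaces the entry $a^{\ldots jj\ldots}$ by $a^{\ldots j\ldots}$, and these agree because an H-box entry equals $a$ precisely when every argument is $1$ and equals $1$ otherwise — the presence of spectator legs is irrelevant, which is also why the normal-form argument behind Lemma~\ref{lem:copy-AND-id} (noted in the excerpt to work for diagrams with free parameters) goes through unchanged. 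Alternatively, one can bypass the induction altogether: both sides of the claimed identity are ZH-diagrams with free parameters, so the normalisation procedure of Ref.~\cite{backens2018zh} applies and one checks directly that they reduce to the same normal form.

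I expect the main obstacle to be essentially presentational rather than mathematical: setting up the !-box bookkeeping so that "two of the $k$ legs" and "the remaining $k-2$ legs" are drawn coherently, and so that the fusion/un-fusion steps are unambiguous in the diagrams. The mathematical content is entirely carried by Lemma~\ref{lem:copy-AND-id} together with the basic fact that a white spider forces all of its legs to carry the same computational-basis value, so beyond making the pictures line up the argument does not run into any genuine difficulty.
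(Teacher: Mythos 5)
Your proposal is correct and follows essentially the same route as the paper: the paper's proof is a short diagrammatic chain that unfuses the white spider to expose a doubly-connected $3$-ary white node, collapses the double edge via Lemma~\ref{lem:copy-AND-id}, and then notes that ``these steps can be repeated until only one connection is left'' --- i.e.\ exactly your induction on the number of parallel legs. Your added remarks (that spectator legs on the H-box are harmless, and that one could alternatively just normalise both sides) are sound but not needed beyond what the paper already does.
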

\begin{proof}
\[\tikzfig{Hbox-multiple-legs-pf}\]
These steps can be repeated until only one connection is left.
\end{proof}

Now we can actually prove the proposition, namely that:
\ctikzfig{AND-to-XOR-thm}

\begin{proof}
Identifying Lemma~\ref{lem:2qubit-AND-to-XOR} as a base case for n=2, we formulate an inductive proof. The lines of some complete bipartite connections are colored light grey to improve readability . Assuming Lemma \ref{lem:AND-to-XOR-rule} holds for $n$ wires, we can write for $n+1$ wires:

\[\tikzfig{AND-to-XOR-pf1}\]
\[\tikzfig{AND-to-XOR-pf2}\]
where now each of the three $n$-ary H-boxes can be expanded to yield:
\[\tikzfig{AND-to-XOR-pf3}\]
\[\tikzfig{AND-to-XOR-pf4}\]
which completes the inductive proof. Above, the three phase gadgets labeled by a red asterisk are combined according to Lemma~\ref{lem:Pgadget-same-control-combination} to yield the fourth labeled phase gadget.
\end{proof}

\section{Proof of Proposition~\ref{lem:XOR-to-AND-rule}}\label{app:XOR-to-AND}
We prove that:
\ctikzfig{XOR-to-AND-thm}

We identify Lemma~\ref{lem:2qubit-XOR-to-AND} as a base case for $n=2$, and we proceed by induction. The lines of some complete bipartite connections are colored light grey to improve readability . Assuming Lemma \ref{lem:XOR-to-AND-rule} holds for $n$ wires, we will write for $n+1$ wires:

\[\tikzfig{XOR-to-AND-pf1}\]
\[\tikzfig{XOR-to-AND-pf2}\]
\[\tikzfig{XOR-to-AND-pf3}\]
\[\tikzfig{XOR-to-AND-pf4}\]
\[\tikzfig{XOR-to-AND-pf5}\]
This finishes the inductive proof.

\section{XOR as sum over bit-strings}\label{app:XOR-sum}
We set out to prove for $\vec{c} \in \mathbb B^n$
\begin{eqnarray*}
\sum_{ \vec{d} \subseteq \vec{c}} (-2)^{\mid \vec{d}\mid -1} = c_1\oplus \ldots \oplus c_n = (\sum_{i=1}^n c_i)\text{ mod }2 = (\mid \vec{c} \mid)\text{ mod }2
\end{eqnarray*}
\begin{proof}
First note that the sum runs over $\vec{d} \subseteq \vec{c}$, i.e. all bit-strings $\{ \vec{d} \mid d_i \leq c_i \ \forall \ i\in \{1,...,n\}\} / \vec{0}$. We explicitly exclude the zero vector $\vec{0}$ from this set, as $\vec{d}$ is used to parametrize a disconnect box in the proof. For $\vec{d}=\vec{0}$, this corresponds to a completely disconnected H-box, i.e. a scalar. Since these are disregarded throughout, taking them into account here will only lead to inconsistencies.

Next, since the summand is only dependent on the length $\mid\vec{d}\mid$, we count how many of these vectors have the same length. Writing $C=\mid\vec{c}\mid$, $D=\mid\vec{d}\mid$, there are $\binom{C}{D}$ vectors $\vec{d} \subseteq \vec{c}$ with length $D$. Therefore,

\begin{eqnarray*}
\sum_{ \vec{d} \subseteq \vec{c}} (-2)^{\mid \vec{d}\mid -1} &=& \sum_{D=1}^C (-2)^{D-1}\binom{C}{D}
\end{eqnarray*}

We can now inductively prove this is equal to the XOR expression. For $C=1$, the sum consists of only one term and both expressions immediately evaluate to $1$.
For $C+1$, we get for the XOR expression

\begin{equation*}
c_1\oplus \ldots \oplus c_n = (C+1)\text{ mod }2 = 1-(C\text{ mod }2) = 1-\tilde{c}_1\oplus \ldots \oplus \tilde{c}_n
\end{equation*}

\noindent While the sum over bitstrings expression yields

\begin{eqnarray*}
\sum_{D=1}^{C+1} (-2)^{D-1}\binom{C+1}{D}
&=& (-2)^C + \sum_{D=1}^{C} (-2)^{D-1}\binom{C+1}{D}\\
&=&(-2)^C+ \sum_{D=1}^{C} (-2)^{D-1}\left(\binom{C}{D}+\binom{C}{D-1}\right) \\
&=&(-2)^C+ \sum_{D=1}^{C} (-2)^{D-1}\binom{C}{D}+ \sum_{D=1}^{C} (-2)^{D-1}\binom{C}{D-1} \\
&=&(-2)^C+ \sum_{D=1}^{C} (-2)^{D-1}\binom{C}{D}+ \sum_{D=0}^{C-1} (-2)^{D}\binom{C}{D} \\
&=&(-2)^C+ (-2)^{C-1} + \sum_{D=1}^{C-1} (-2)^{D-1}\binom{C}{D} + (-2)^0 + \sum_{D=1}^{C-1} (-2)^{D}\binom{C}{D} \\
&=&(-2)^C+ (-2)^{C-1} + 1 + \sum_{D=1}^{C-1} \left((-2)^{D-1}+(-2)^D\right)\binom{C}{D} \\
&=& 1-(-2)^{C-1} + \sum_{D=1}^{C-1} -(-2)^{D-1}\binom{C}{D} \\
&=& 1-(-2)^{C-1}\binom{C}{C} - \sum_{D=1}^{C-1} (-2)^{D-1}\binom{C}{D} \\
&=& 1-\sum_{D=1}^{C} (-2)^{D-1}\binom{C}{D}
\end{eqnarray*}

Since both show the same behaviour, we can conclude the two representations are instances of the same function.
\end{proof}

\begin{corollary}\label{cor:alternative-XOR}
We can write, for $\vec{b},\vec{c} \in \mathbb B^n$
\begin{eqnarray*}
\sum_{ \vec{d} \subseteq \vec{b}\ast\vec{c}} (-2)^{\mid \vec{d}\mid -1}
&=& (\vec{b}\ast\vec{c})_1\oplus \ldots \oplus (\vec{b}\ast\vec{c})_n \\
&=& \vec{b}_1\vec{c}_1 \oplus \ldots \oplus \vec{b}_n\vec{c}_n \\
&\equiv& \Omega(\vec{b},\vec{c})
\end{eqnarray*}
\end{corollary}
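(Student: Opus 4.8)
The plan is to obtain this as an immediate instance of the identity already established in the first part of Appendix~\ref{app:XOR-sum}. That identity reads, for an \emph{arbitrary} bit-string $\vec{e}\in\mathbb B^n$,
\[
\sum_{\vec{d}\subseteq\vec{e}}(-2)^{|\vec{d}|-1} \;=\; e_1\oplus\cdots\oplus e_n,
\]
where, as there, the summation range $\{\vec{d}\mid d_i\le e_i\ \forall i\}$ excludes $\vec{0}$. The first step is simply to specialise this to $\vec{e} := \vec{b}\ast\vec{c}$, which produces the first equality of the corollary verbatim; since the convention on the summation range is identical on both sides, the substitution is legitimate and nothing else needs to be checked at this stage.

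Next I would unfold the definition of the Schur product: each component of $\vec{b}\ast\vec{c}$ is the pairwise AND $b_i\wedge c_i$, which for $b_i,c_i\in\{0,1\}$ coincides with the ordinary integer product $b_ic_i$. Rewriting $(\vec{b}\ast\vec{c})_1\oplus\cdots\oplus(\vec{b}\ast\vec{c})_n$ as $b_1c_1\oplus\cdots\oplus b_nc_n$ then gives the second line, and the third line is nothing more than the definition $\Omega(\vec{b},\vec{c}) = b_1c_1\oplus\cdots\oplus b_nc_n$ used in Theorem~\ref{prop:fourier-transform}.

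There is no real obstacle here: the corollary is a purely formal consequence of the preceding summation identity together with two definitional rewrites, so the only thing that might warrant a remark is the degenerate case $\vec{b}\ast\vec{c}=\vec{0}$ (i.e.\ $\vec{b}$ and $\vec{c}$ share no $1$'s). In that case the summation range is empty, so the left-hand side is the empty sum $0$; correspondingly every $b_ic_i$ vanishes and $\Omega(\vec{b},\vec{c})=0$ as well, so the two sides still agree and the convention of dropping $\vec{0}$ from the index set does no harm.
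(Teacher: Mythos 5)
Your proposal is correct and matches the paper's (implicit) argument exactly: the corollary is obtained by specialising the summation identity of Appendix~\ref{app:XOR-sum} to the bit-string $\vec{b}\ast\vec{c}$, then unfolding the Schur product componentwise as $b_ic_i$ and recognising the definition of $\Omega(\vec{b},\vec{c})$. Your remark on the degenerate case $\vec{b}\ast\vec{c}=\vec{0}$, where both sides vanish under the convention of excluding $\vec{0}$ from the index set, is a sensible extra check consistent with the paper's conventions.
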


\end{document}